\documentclass[10pt, oneside, reqno]{amsart}

\usepackage{enumitem}
\input ./math_headers_amsart.sty

\bibliography{Twist} 

\def\EE{{\bb E}}

\title{Framed $\bb E_n$-algebras from quantum field theory}
\author{Chris Elliott \and Owen Gwilliam}
\date{\today}

\begin{document}

\begin{abstract}
This paper addresses the following question: given a topological quantum field theory on $\RR^n$ built from an action functional, when is it possible to globalize the theory so that it makes sense on an arbitrary smooth oriented $n$-manifold?  
We study a broad class of topological field theories --- those of AKSZ type --- and obtain an explicit condition for the vanishing of the \emph{framing anomaly}, i.e., the obstruction to performing this globalization procedure.  
We also interpret our results in terms of identifying the observables as an algebra over the framed little $n$-disks operad.
Our analysis uses the BV formalism for perturbative field theory and the notion of factorization homology.
\end{abstract}

\maketitle

\section{Introduction}
Topological field theory has offered a rich domain of common interest for mathematicians and theoretical physicists over the last few decades.
In this paper we examine how and when a constructive method from physics -- the Batalin--Vilkovisky (BV) formalism in conjunction with rigorous renormalization techniques of Axelrod--Singer and Kontsevich for Chern--Simons-type theories -- produces an algebra over the {\em framed} little $n$-disks operad.
Our work here builds upon and extends prior work by the first author \cite{ElliottSafronov} that explains how this constructive method can produce algebras over the framed little $n$-disks operad in general. 
We will see that the obstruction to lifting from the unframed to the framed setting, or {\em framing anomaly}, is always expressed in terms of Pontryagin classes, suitably interpreted.
Our methods are an analog, for a class of theories we will refer to as topological AKSZ theories, of the formalism for anomalies associated with Stora, Wess, and Zumino \cite{Zumino, Stora}, 
but in this topological setting, we can relate directly to the obstruction theory for algebras over these operads.
Here we focus on an explicit computation within the BV framework as articulated by Costello \cite{CostelloBook} and developed further in~\cite{ElliottSafronov}.

Let us describe concisely some concrete consequences of the results proved here.
Our results apply to theories like Chern--Simons theory, topological BF theories, and topological AKSZ theories in general.
Using a simple point-splitting regularization (sometimes called the ``configuration space method''), one can handle divergences in such theories;
the only obstruction to quantization is whether the quantized action satisfies the quantum master equation.
When this obstruction vanishes, the results of \cite{ElliottSafronov} show that the observables of the theory provide an $\EE_n$ algebra.
Here we compute the obstruction-deformation complex describing the ability to lift such an $\EE_n$ algebra structure to a framed $\EE_n$ algebra structure;
we also explain how the obstruction to lifting can be seen as arising from a kind of equivariant quantum master equation.

Why bother to make such a lift? And how do these algebras relate to more conventional approaches to topological field theory?
We will offer answers aimed at topologists and then at physicists.

Functorial field theories, in the style of Atiyah--Segal--Lurie, arise from (framed) $\EE_n$ algebras via factorization homology (see \S4.1 of \cite{LurieTFT} or \cite{ScheimbauerThesis}).
Briefly, an $\EE_n$ algebra $A$ determines a {\em framed} fully extended $n$-dimensional topological field theory with values in a ``higher Morita category'' built from $\EE_n$-algebras. 
A $k$-manifold $X$ with a framing of the bundle $TX \times \RR^{n-k}$ is assigned the invariant 
\[
Z(X) = \int_{X \times \RR^{n-k}} A,
\]
the factorization homology over the $n$-dimensional manifold made by thickening $X$.
The functor $Z$ offers a sophisticated invariant of such $n$-framed manifolds,
but such manifolds are relatively rare. 
(Thank of the $n=2$ case. The only framed closed 2-manifolds are genus 1.)
On the other hand, a framed $\EE_n$ algebra $A$ determines an {\em oriented} fully extended $n$-dimensional topological field theory with values in this ``higher Morita category'' built from $\EE_n$-algebras. 
We now ask that a $k$-manifold $X$ admits an orientation on $TX \times \RR^{n-k}$.
Such manifolds are much more abundant.
Our results thus show how a large class of TFTS -- in the physicist's sense -- determine extended oriented TFTs in the sense of Baez--Dolan and Lurie.

This rather abstract formulation can be expressed in more concrete, physical terms.
The $\EE_n$ algebra of a TFT encodes the operator product expansion of the local operators, with extensive thoroughness.
Think of the local operator that arises from picking a configuration of $k$ distinct, ordered points in $\RR^n$ and inserting a local operator at each point.
Although the value itself is essentially independent of the location of the insertions (you can wiggle the points without changing the output, up to exact terms), 
the topology of the configurations of points is quite rich,
and the $\EE_n$ algebra keeps track of how the OPE depends on that topology.
In other words, it encodes Witten descent and related manipulations.
The associated functorial TFT associates to a $k$-manifold $X$ the $\EE_{n-k}$-algebra encoding the OPE of the full theory dimensionally reduced along~$X$.

Our results explain the conditions under which you can implement this construction -- the OPE algebras and their dimensional reductions -- on oriented manifolds.
In other words, one needs to know how to encode descent given an orientation,
and the anomaly to such descent lies in our obstruction-deformation complex.
In this paper we do not compute any explicit anomalies, leaving that for a forthcoming companion paper \cite{EGWfr}, 
but we do note theories for which the anomaly must vanish because the relevant cohomology group vanishes.  The following is a concrete example.

\begin{example}
Consider a topological BF theory on $\RR^n$ for $n \ge 3$, with gauge Lie algebra $\gg$ a simple Lie algebra.  The results in this paper demonstrate that the only possible framing anomaly for such a theory lies in the Lie algebra cohomology group
\[\bigoplus_{i =1}^{n-1}\mr H^i(\so(n)) \otimes \mr H^{n-i}(\gg).\]
Above degree zero, the cohomology of $\so(n)$ is supported in degrees $3 \text{ mod } 4$, and the cohomology of $\gg$ is supported in odd degrees $\ge 3$.  Thus, we can conclude:
\begin{prop}
For a topological BF theory as above:
\begin{enumerate}
 \item[(1)] The framing anomaly vanishes when the dimension $n$ is odd.
 \item[(2)] For $\gg = \so(k)$ or $\gg = \sp(k)$, the framing anomaly vanishes when the dimension $n$ is not equal to $2 \text{ mod } 4$.
\end{enumerate}
\end{prop}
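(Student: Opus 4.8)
The plan is to reduce both statements to a degree-counting argument inside the Lie algebra cohomology group that houses the framing anomaly. By the results of this paper recalled above, the anomaly is a class in
\[
V \;=\; \bigoplus_{i=1}^{n-1} \mr H^i(\so(n)) \otimes \mr H^{n-i}(\gg),
\]
so it is enough to show that $V = 0$ under each hypothesis; in fact I will argue that each summand is zero, which is a purely combinatorial statement about the degrees in which the two factors are supported.

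First I would record the input. Above degree zero, $\mr H^\bullet(\so(n))$ is concentrated in degrees $\equiv 3 \pmod{4}$ --- these are the transgressions of the Pontryagin classes --- and for any simple $\gg$ the positive-degree cohomology $\mr H^\bullet(\gg)$ is concentrated in odd degrees $\ge 3$; in particular the positive-degree part of each ring lives in odd degrees. For part (1), suppose some summand with $1 \le i \le n-1$ were nonzero. Then $i$ is odd (from the $\so(n)$ factor) and $n - i$ is odd (from the $\gg$ factor), so $n = i + (n-i)$ is even, contradicting that $n$ is odd. Hence $V = 0$ and the framing anomaly vanishes.

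For part (2) I would sharpen the statement about the gauge algebra. When $\gg = \so(k)$ or $\gg = \sp(k)$ --- the classical $B$, $C$, and $D$ series --- the generators of $\mr H^\bullet(\gg)$ that can contribute to the anomaly, namely the trace (Pontryagin-type) invariants, sit in degrees $\equiv 3 \pmod{4}$, just as for $\so(n)$; this is the feature distinguishing the $B$, $C$, $D$ series from the $A$-series, whose extra generators in degrees $\equiv 1 \pmod{4}$ (beginning in degree $5$) would break the count. Rerunning the argument with this finer input, a nonzero summand of $V$ forces $i \equiv 3$ and $n - i \equiv 3 \pmod{4}$, hence $n \equiv 6 \equiv 2 \pmod{4}$. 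So $V = 0$ whenever $n \not\equiv 2 \pmod{4}$; this simultaneously recovers part (1) for these gauge algebras and takes care of the remaining case $n \equiv 0 \pmod{4}$.

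The one genuinely delicate point, I expect, is to pin down precisely which cohomology classes of $\so(n)$ and of $\gg$ actually contribute to $V$, and hence in exactly which degrees $V$ is supported. The rings in play are exterior algebras, so arbitrary products of generators realize many residues modulo $4$; the bookkeeping above is really the assertion that only the primitive, Pontryagin-type part of each factor enters the anomaly, together with the exclusion of the Euler/Pfaffian generator of $\so(2m)$, which for $m$ odd lies in degree $\equiv 1 \pmod{4}$ and would break the mod-$4$ argument. Justifying that these classes do not occur --- in line with the paper's assertion that the anomaly is always expressed in terms of Pontryagin classes --- is the crux; granting that, parts (1) and (2) follow by immediate mod-$2$ and mod-$4$ arithmetic.
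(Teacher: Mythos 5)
Your proposal is essentially the paper's own argument: the proposition is deduced there from exactly the same reduction to the group $\bigoplus_{i=1}^{n-1}\mr H^i(\so(n))\otimes \mr H^{n-i}(\gg)$ together with the asserted degree-support facts and the mod~2/mod~4 arithmetic. The ``crux'' you flag --- that these are exterior algebras whose decomposable classes (and the Pfaffian-type generator of $\so(2m)$) can land in other residues, so the support claims are literally valid only for the primitive, Pontryagin-type generators --- is not addressed in the paper at all, which simply asserts the support statements; your caveat is, if anything, more careful than the printed proof.
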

\end{example}

\begin{example}
Let us now consider the example of 3-dimensional Chern--Simons theory with an arbitrary semisimple gauge group,
which has a well-known framing anomaly for ordinary Chern--Simons theory \cite{WittenJones, AxelrodSinger}.  Although classical Chern--Simons theory can be defined on any oriented 3-manifold, its quantization depends on a choice of framing for the 3-manifold.  The quantization of Chern--Simons theory, including the framing anomaly, is discussed in the language of the BV formalism by Iacovino \cite{Iacovino}. 

\begin{prop}
There is no obstruction to quantizing the $\mf{iso}(3)_\mr{dR}$ action for Chern--Simons theory on $\RR^3$.  However there \emph{is} a potential obstruction to this quantization as an \emph{inner} action.\
\end{prop}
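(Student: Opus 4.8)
The plan is to cast both quantization problems as the obstruction--deformation problems studied in the body of the paper, specialized to three-dimensional Chern--Simons theory with its local $L_\infty$ algebra $\Omega^\bullet(\RR^3)\otimes\gg[1]$, and then to compare the two obstruction groups. Granting that the ordinary quantum master equation can be solved for Chern--Simons theory on $\RR^3$ (it can, by the configuration-space point-splitting method), the obstruction to quantizing the $\mf{iso}(3)_\mr{dR}$ action is the class of the equivariant anomaly in $\mr H^1$ of the Chevalley--Eilenberg complex of $\mf{iso}(3)_\mr{dR}$ with coefficients in the translation-invariant local functionals modulo the unit. Upgrading the action to an \emph{inner} one --- realizing it through the factorization algebra of observables, in the sense of Noether, as is needed to make the observables into a genuine framed $\EE_3$ algebra --- is controlled by the same complex but with the unit observable (equivalently, the space of currents) retained, sitting in cohomological degree $n-1 = 2$ relative to the shifted Poisson structure. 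Since the de Rham resolution of the translations is acyclic, $\mf{iso}(3)_\mr{dR}\simeq\so(3)$, and the two complexes fit into a short exact sequence
\[
0 \longrightarrow C^\bullet_\mr{Lie}(\so(3))[2] \longrightarrow C^\bullet_\mr{Lie}\!\big(\mf{iso}(3)_\mr{dR},\, \mathcal{O}_\mr{loc}\big) \longrightarrow C^\bullet_\mr{Lie}\!\big(\mf{iso}(3)_\mr{dR},\, \mathcal{O}_\mr{loc}/\RR\big) \longrightarrow 0,
\]
whose long exact sequence in cohomology links the two obstruction groups through $\mr H^\bullet(\so(3))$.

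For part (1), I would compute the cohomology of the quotient complex exactly as in the BF-theory example: the general reduction of the paper places the obstruction in $\bigoplus_{i=1}^{n-1}\mr H^i(\so(n)) \otimes \mr H^{n-i}(\gg)$. For $n=3$ the index $i$ ranges over $\{1,2\}$, so both tensor factors carry degree at most $2$; but for $\gg$ semisimple the positive-degree cohomology of $\so(3)$ is concentrated in degree $3$ and that of $\gg$ in odd degrees $\ge 3$. Hence every summand vanishes, and there is no obstruction to quantizing the $\mf{iso}(3)_\mr{dR}$ action.

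For part (2), the long exact sequence and the vanishing just established yield an exact sequence $\mr H^1\!\big(C^\bullet_\mr{Lie}(\so(3))[2]\big) \to \mr H^1(\mathrm{inner}) \to 0$, so $\mr H^1(\mathrm{inner})$ is a quotient of $\mr H^1\!\big(C^\bullet_\mr{Lie}(\so(3))[2]\big) = \mr H^3(\so(3)) \cong \RR$, spanned by the Cartan $3$-form of $\so(3)$ --- equivalently by the gravitational Chern--Simons Lagrangian $\mr{tr}\!\big(\omega\, d\omega + \tfrac{2}{3}\omega^{3}\big)$ multiplied by the unit observable. It then only remains to check that the connecting map from $\mr H^0$ of the quotient complex does not hit this generator, which is immediate on degree grounds: no degree-zero local functional of the Chern--Simons fields transgresses to the gravitational Chern--Simons form. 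Therefore $\mr H^1(\mathrm{inner}) \neq 0$, which is the asserted potential obstruction --- and, as one expects, it is the home of Witten's framing anomaly. The main obstacle throughout is the first step: making precise exactly how the inner and non-inner coefficient complexes differ --- in particular that the extra summand is the unit placed in degree $n-1$ --- and then confirming that the resulting class in $\mr H^3(\so(3))$ genuinely survives to $\mr H^1(\mathrm{inner})$ rather than being exact; once the complexes are pinned down, the remainder is the paper's general computation together with the standard cohomology of $\so(3)$ and of a semisimple Lie algebra.
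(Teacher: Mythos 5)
Your conclusions are correct, and in substance you are doing what the paper does: the proposition is read off from Theorem~\ref{obstruction_theorem} with $n=3$ and $L=\gg$ semisimple. For the non-inner statement your degree count is exactly the intended argument: every summand of $\bigoplus_{i+j=3}\mr H^i_{\mr{red}}(\so(3))\otimes \mr H^j_{\mr{red}}(\gg)$ dies because $\mr H^{>0}(\so(3))$ starts in degree $3$ and $\mr H^{>0}(\gg)$ starts in degree $3$; and for the inner statement the group acquires the extra summand $\mr H^3(\so(3))\otimes \mr H^0(\gg)\iso \CC$, which is the potential framing anomaly.

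Where your write-up is shaky is the scaffolding you build for part (2) in place of simply quoting the inner half of Theorem~\ref{obstruction_theorem}. First, the assertion ``$\mf{iso}(3)_{\mr{dR}}\simeq \so(3)$'' is false: the de Rham construction is applied to all of $\mf{iso}(3)$, so $\mf{iso}(3)_{\mr{dR}}$ is contractible as a dg Lie algebra, and contracting only the translation part would leave $\so(3)_{\mr{dR}}$, not $\so(3)$. Naively replacing the acting algebra by a quasi-isomorphic one is precisely the move the paper's Section~\ref{anomaly_section} shows you cannot make (it would predict no anomaly at all); the correct object is the fiber $\mc C_{3,\gg}$ of $\mr{Act}_{\mf{iso}(3)_{\mr{dR}}}\to \mr{Act}_{\so(3)\ltimes\RR^3_{\mr{dR}}}$ of Lemma~\ref{description_of_fiber_lemma}, which carries the Weil-algebra generators $\sym^{>0}(\so(3)^*[-2])$, and the $\mr H_{\mr{red}}(\so(3))$-shaped answer (with the unit effectively placed in degree $n-1$, which you flag as the sticking point) only emerges after the $\eta_1\mapsto p_1$ cancellation of Lemma~\ref{d3_differential_lemma}; the surviving class is $p_1$ times the unit, not literally the Cartan form. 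Second, your short exact sequence is oriented backwards: since $\mr C^\bullet_{\mr{red}}(L)\subset \mr C^\bullet(L)$ with quotient $\CC$, the non-inner fiber complex is the \emph{sub}complex of the inner one and the unit part is the \emph{quotient}; your argument still closes only because the neighboring groups $\mr H^1$ and $\mr H^2$ of the non-inner complex both vanish in this example, a fact you should state rather than the ``degree grounds'' remark about transgression. The efficient repair is to invoke both displayed groups of Theorem~\ref{obstruction_theorem} directly, after which the proposition is exactly your part (1) degree count together with the observation that the inner group contains $\mr H^3(\so(3))\otimes \mr H^0(\gg)\neq 0$.
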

\end{example}

\begin{example}
In higher dimensions, there are \emph{abelian} Chern--Simons theory on $\RR^n$ for any odd integer $n \ge 3$, having to do with connection-type data on higher $U(1)$-gerbes. 
Concretely, we consider the perturbative theories expressed in terms of the formal mapping space $\mr{Map}(\RR^n_{\mr{dR}}, B^{\frac{n-1}2}\mf u(1))$.  
Our methods let us understand the possible obstructions to Chern--Simons theories of this type.

\begin{prop}
There is no obstruction to quantizing the $\mf{iso}(n)_\mr{dR}$ action for Chern--Simons theory on $\RR^n$ with gauge Lie algebra $\mf u(1)$, for any odd integer $n \ge 3$.  However there is a potential obstruction to this quantization as an inner action whenever $n \equiv 3 \text{ mod } 4$.
\end{prop}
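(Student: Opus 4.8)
The plan is to feed the target of this theory into the obstruction-theoretic framework set up above and read off the two relevant cohomology groups. The theory is the topological AKSZ theory $\mr{Map}(\RR^n_\mr{dR}, B^{\frac{n-1}2}\mf u(1))$, and $B^{\frac{n-1}2}\mf u(1)$ is $(n-1)$-shifted symplectic — which is exactly why $n$ must be odd, so that $m := \frac{n-1}2$ is an integer. Its underlying cyclic $L_\infty$ algebra $\gg$ is a one-dimensional abelian $L_\infty$ algebra concentrated in a single cohomological degree, with its tautological pairing. Two features of $\gg$ will drive everything: the AKSZ action is purely quadratic, so the theory is \emph{free}; and the Chevalley--Eilenberg complex $\mr C^\bullet(\gg)$ is the free graded-commutative algebra on a single generator in degree $m$, so it is an exterior algebra with top degree $m$ when $m$ is odd, and a polynomial algebra (nonzero exactly in degrees $0, m, 2m, \dots$) when $m$ is even.

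First I would settle the claim about the bare $\mf{iso}(n)_\mr{dR}$ action. The AKSZ construction is manifestly functorial in its source $\RR^n_\mr{dR}$, so the classical theory is tautologically $\mf{iso}(n)_\mr{dR}$-equivariant, and the point-splitting (configuration-space) regularization preserves this; hence the only obstruction to quantizing the action is the degree-one class in the equivariant deformation complex of our general theorem. For this one-dimensional abelian $\gg$ that complex is small enough to evaluate by hand: the $\mf{iso}(n)_\mr{dR}$-contribution is crystalline over the contractible $\RR^n$, hence trivial, so only the reduced cohomology of $\mr C^\bullet(\gg)$ survives, in internal degree $n$. But when $m$ is odd, $\mr C^\bullet(\gg)$ has top degree $m<n$; and when $m$ is even, $n$ — being odd — is not a multiple of the even number $m$. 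Either way the relevant reduced cohomology vanishes, so there is no obstruction. Conceptually: for a free theory the classical symmetry currents always lift — the factorization algebra of observables is of Weyl type, and isometries act on it by inner automorphisms — leaving no room for a one-loop anomaly to the bare action.

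Next, the inner refinement. An inner action additionally requires a ``Hamiltonian'' implementing the infinitesimal $\so(n)$ action by inner derivations compatibly with the quantum master equation, and the obstruction to choosing one lives in the complex identified in our general theorem, which now also involves $\mr H^{>0}(\so(n))$ — whose generators are the transgressed Pontryagin classes, in degrees $4k-1$ — tensored against the \emph{full} $\mr C^\bullet(\gg)$, with the total obstruction in internal degree $n$ (equivalently: the framing anomaly is represented by a degree-$(n+1)$ characteristic class of the $\mr{SO}(n)$ frame bundle). Since $\gg$ is one-dimensional and abelian and the theory free, the internal tensor factor contributes only its unit in the relevant slot, so this group contains the ``purely gravitational'' summand $\mr H^n(\so(n))$. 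When $n \equiv 3 \text{ mod } 4$ we have $n = 4k-1$ with $k = \frac{n+1}4 \le \frac{n-1}2$ for all $n\ge 3$, so $\so(n)$ does carry the Pontryagin class $p_{(n+1)/4}$ and $\mr H^n(\so(n))$ contains the corresponding nonzero exterior generator in degree $n$; hence the obstruction group is nonzero and there is a potential framing anomaly. (In the remaining case $n \equiv 1 \text{ mod } 4$ there is no degree-$(n+1)$ polynomial in Pontryagin classes at all, as $n+1$ is not divisible by $4$, so this mechanism produces nothing.)

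The step I expect to be the main obstacle is this last identification: one must follow the general theorem closely enough to be certain that the inner condition contributes exactly the characteristic classes of the $\so(n)$ frame bundle — in particular that, for the free one-dimensional abelian target, the higher-degree and decomposable classes of $\mr C^\bullet(\gg)$ feed only into removable counterterms and not into genuine obstructions — and, on the other side, to confirm that the $\mf{iso}(n)_\mr{dR}$-equivariant quantum master equation is genuinely unobstructed for a free theory over the contractible $\RR^n$. The degree bookkeeping relating ``internal degree $n$ in the obstruction complex'' to ``degree-$(n+1)$ anomaly polynomial,'' and hence to the residue of $n$ modulo $4$, will also need to be matched against the conventions fixed earlier in the paper.
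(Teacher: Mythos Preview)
The paper states this proposition in the introduction without a standalone proof; it is meant as an application of Theorem~\ref{obstruction_theorem}. Your argument for the \emph{inner} claim is exactly that application: the inner obstruction group contains the summand $\mr H^n(\so(n)) \otimes \mr H^0(L)$, and when $n\equiv 3\pmod 4$ the transgressed Pontryagin generator $\eta_{(n+1)/4}$ sits in degree~$n$, so the group is nonzero. That part is correct and matches the paper's intended reasoning.

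For the first (non-inner) claim your argument takes a different route, and in fact it must: a direct appeal to Theorem~\ref{obstruction_theorem} does \emph{not} show the group $\bigoplus_{i+j=n}\mr H^i_{\mr{red}}(\so(n))\otimes\mr H^j_{\mr{red}}(L)$ vanishes. For instance at $n=5$ one has $m=2$, so $\mr C^\bullet(L)$ is a polynomial algebra on a degree-$2$ class and $\mr H^3(\so(5))\otimes\mr H^2_{\mr{red}}(L)\ne 0$. The genuine reason the obstruction vanishes is the one you mention at the end of that paragraph: the abelian theory is \emph{free}, so there are no loop corrections to the effective equivariant action and the classical equivariant master equation of Proposition~\ref{isometry_action_prop} is already the quantum one. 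Your preceding cohomological computation, however, is not that argument. You compute $\mr H^n_{\mr{red}}(L)$ rather than the paper's relative group, by invoking contractibility of $\mf{iso}(n)_{\mr{dR}}$ to collapse the equivariant complex; but that collapse identifies the \emph{absolute} $\mf{iso}(n)_{\mr{dR}}$-equivariant deformation space, not the fiber $\mc C_{n,L}$ of Lemma~\ref{description_of_fiber_lemma}, so it does not address the obstruction of Theorem~\ref{obstruction_theorem} at all. I would drop that calculation and promote the free-theory remark to the actual proof of the first claim.
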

\end{example}

\subsection{Overview of the Paper}
We begin in Section \ref{AKSZ_section} by discussing the class of field theory to which our results apply: topological AKSZ theories.  These are topological field theories whose fields can be described in terms of mapping spaces, with BV action functional generated by the AKSZ approach \cite{AKSZ}, via transgression of a shifted symplectic structure on the target of the mapping space.

While these theories make sense on any smooth manifold, in Section \ref{dR_iso_section} we specialize to theories defined on a vector space $\RR^n$, and begin to incorporate the action of the group of isometries.  At the classical level, topological AKSZ theories admit not only an action of the isometry group, but also a trivialization of this action up to homotopy.  The main results of the present paper concern the lift of this homotopy trivialization to the quantum level.  

We discuss the implications of such a lift in Section \ref{En_section}, in which we recall results from \cite{ElliottSafronov} that allow for the realization of a framed $\bb E_n$-algebra structure on the observables of a quantum field theory on $\RR^n$, provided we can define a quantization of the homotopy trivialization of the isometry action.  Such a structure permits the application of the tool of factorization homology to extend such a quantum field theory on $\RR^n$ to more general oriented smooth $n$-manifolds.

In the final section, Section \ref{anomaly_section}, we characterize exactly when it is possible to quantize the homotopically trivial isometry action.  There is a potential anomaly (the framing anomaly) obstructing this quantization, and we explicitly compute the cohomology group in which the obstruction lives.  In many examples, as discussed above, this immediately tells us that the framing anomaly vanishes, so that there is no obstruction to quantization.

\subsection{Acknowledgements}
The authors would like to thank Pavel Safronov and Brian Williams for helpful comments and conversations during the preparation of this paper.  
The National Science Foundation supported O.G. through DMS Grants No. 1812049 and 2042052. 
Any opinions,
findings, and conclusions or recommendations expressed in this material are those of the authors and do not
necessarily reflect the views of the National Science Foundation.

\section{Topological AKSZ Theories} \label{AKSZ_section}

In this paper we will focus on a natural class of topological field theories that can be defined in any dimension, which we will refer to as \emph{topological AKSZ theories}.

\begin{remark}
In this paper we will model classical and quantum field theories in terms of the Batalin--Vilkovisky (BV) formalism \cite{BatalinVilkovisky}.  More specifically, we will be using the model for perturbative classical field theory described in \cite{CostelloBook, Book2}.  See also \cite[Section 1]{ESW} for a summary of the definitions that we will using when we define a classical field theory.  
\end{remark}

\begin{definition}\label{def E_L}
Let $M$ be an oriented $n$-manifold, and let $L$ denote an $L_\infty$ algebra equipped with a cyclic pairing of degree $n-3$.  
We view $L$ as presenting a formal moduli space $\mr B L$ equipped with a shifted symplectic form of degree $n-1$.
The \emph{topological AKSZ theory} on $M$ with target $\mr B L$ is the classical BV theory with whose underlying graded space of fields is
\[
\Omega^\bullet(M) \otimes L[1]
\]
and whose dynamics are encoded by an $L_\infty$ structure on the cochain complex
\[\mc E_L = (\Omega^\bullet(M) \otimes L, \d_{\mr{dR}} \otimes 1 + 1 \otimes \d_L)\]
arising from the wedge product of forms and the brackets on $L$. 
\end{definition}

The pairing on $L$ and integration over $M$ provide a local shifted symplectic structure, or, more accurately, the antibracket on observables (i.e., the Chevalley-Eilenberg cochains of~$\mc E_L$).

We remark that these theories are also often called {\em generalized Chern-Simons theories} \cite{SchwarzGCS, MovshevSchwarz}.

\begin{remark}
If $M$ is compact and $X$ is an $n-1$-shifted symplectic derived stack, then there is a $-1$-shifted symplectic structure on the derived mapping stack
\[\mc M_X(M) = \mr{Map}(M_{\mr{dR}}, X)\]
given by the AKSZ construction of Pantev, To\"en, Vaqui\'e and Vezzosi \cite{PTVV}.  The shifted tangent complex $L = T_x[-1]X$ at a closed point $x$ of $X$ has the structure of an $L_\infty$ algebra with a degree $n-3$ symplectic pairing.  We can identify $\mc E_L$ with the shifted tangent complex of the mapping stack $\mc M$ at the constant map with value $x$. 
\end{remark}

\begin{examples}
Many standard examples fit inside this framework.
\begin{enumerate}
 \item For $n=3$ and $L = \gg$ a reductive Lie algebra equipped with an invariant pairing, the topological AKSZ theory describes perturbative Chern--Simons theory on $M$ with gauge Lie algebra~$\gg$.
 \item For general $n$, let $L = \gg \oplus \gg^*[n-3]$ where $\gg$ is a finite-dimensional Lie algebra acting on $\gg^*[n-3]$ by its coadjoint representation.  In this case the topological AKSZ theory describes perturbative topological BF theory on $M$ with gauge Lie algebra~$\gg$.
 \item More generally, we can replace $\gg$ in the above example by the shifted tangent space $T_y[-1]Y$ to a complex manifold $Y$, and consider 
 \[L = T_y[-1]Y \oplus T^*_y[n-2]Y \iso T_{(y,0)}[-1](T^*[n-1]Y).\]
We can now identify the topological AKSZ theory with the perturbation theory around a constant map of the derived mapping space~$T^*[-1]\mr{Map}(M_{\mr{dR}},Y)$.
\end{enumerate}
\end{examples}

Topological AKSZ theories are extremely amenable to quantization, using techniques developed by Axelrod and Singer \cite{AxelrodSinger} and Kontsevich \cite{KontsevichECM}. 
(See also the summary of Costello, written in language closer to that used in this article~\cite[Section 15]{CostelloBVR}).  We use the term \emph{prequantization} to mean the construction of a family of effective action functionals compatible under the renormalization group flow. 
In this terminology, to provide a \emph{quantization}, these effective action functionals must also satisfy the quantum master equation.

\begin{theorem}
Any topological AKSZ theory can be prequantized to all orders, uniquely up to a contractible choice.  This prequantization can be computed explicitly, and there are no counter-terms.
\end{theorem}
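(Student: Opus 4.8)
The plan is to follow Costello's construction of effective field theories \cite{CostelloBook}, exploiting the special form that the propagator takes for topological AKSZ theories in order to reduce every Feynman weight to a convergent configuration-space integral of the type studied by Axelrod--Singer \cite{AxelrodSinger} and Kontsevich \cite{KontsevichECM}.

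First I would fix a gauge-fixing operator: a Riemannian metric on $M$ determines the Hodge adjoint $\d^*_{\mr{dR}}$ and the Laplacian $\Delta$ on $\Omega^\bullet(M)$, hence the operator $\d^*_{\mr{dR}} \otimes 1$ on $\mc E_L$. The space of such metrics --- more generally, the simplicial set of parametrices in the sense of \cite{CostelloBook} --- is contractible, and this will supply the ``contractible choice'' in the statement. Relative to this gauge fixing the propagator factorizes as $P(\epsilon, L) = P_M(\epsilon, L) \otimes c_L$, where $P_M(\epsilon, L) = \int_\epsilon^L (\d^*_{\mr{dR}} \otimes 1) K_t \, \d t$ is the de Rham propagator built from the heat kernel $K_t$ of $\Delta$ on $\Omega^\bullet(M)$, and $c_L \in L \otimes L$ is the fixed inverse of the cyclic pairing. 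The crucial point is that all of the dependence on $L$ sits in the constant tensor $c_L$; the only analytic content is $P_M$ on the finite-dimensional manifold $M$.

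I would then take as candidate prequantization the naive Feynman expansion $I[L] = \sum_\gamma \frac{\hbar^{g(\gamma)}}{|\mr{Aut}(\gamma)|}\, W_\gamma(P(\epsilon, L), I)$ in the limit $\epsilon \to 0$, summing over connected graphs $\gamma$ of all genera $g(\gamma)$ with $I$ the classical AKSZ interaction, and verify two things. (i) The renormalization group equation holds: this is formal, following from the heat-kernel semigroup property together with $P(L, L') = P(\epsilon, L') - P(\epsilon, L)$, exactly as in \cite{CostelloBook}, so the whole family is determined by a single-scale computation and is automatically RG-compatible. (ii) Each limit $\lim_{\epsilon\to 0} W_\gamma(P(\epsilon,L), I)$ exists without counter-terms. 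For (ii), the factorization of the propagator identifies $W_\gamma(P(\epsilon,L), I)$, up to a combinatorial factor obtained by contracting copies of $c_L$ against the brackets of $L$ at the vertices, with an integral over $M^{V(\gamma)}$ of a wedge product of pullbacks of $P_M(\epsilon, L)$ along the edge maps. Since the AKSZ interaction uses only the wedge product and no derivatives, a degree count shows these are precisely the configuration-space integrals of Axelrod--Singer and Kontsevich: as $\epsilon \to 0$ the integrand extends to a smooth form on the Fulton--MacPherson--Axelrod--Singer compactification $\overline{\mr{Conf}}_{V(\gamma)}(M)$, so the integral converges and no subtraction is needed. This convergence is also the content of ``computed explicitly'': the weights are the standard Chern--Simons-type configuration integrals. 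Uniqueness up to contractible choice then follows, since the only inputs were the metric and parametrix, which form a contractible space, and for each such choice the no-counter-term prescription singles out a canonical prequantization; Costello's comparison of parametrices identifies them all.

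The main obstacle is step (ii): the analytic estimate that collisions of vertices along the diagonals of $M^{V(\gamma)}$ do not generate divergences. This is the heart of the Axelrod--Singer and Kontsevich results, and I would either invoke it directly or reprove it by the standard power-counting argument at the boundary of $\overline{\mr{Conf}}_{V(\gamma)}(M)$: near a stratum where a subset $S$ of vertices collide at a point of $\RR^n$, after rescaling the leading behaviour of $\bigwedge_e P_M$ is governed by the corresponding graph on $S$ for the flat Laplacian on $\RR^n$, whose propagator extends smoothly to the relevant blow-up, and the degree of the resulting form matches the dimension of the stratum, so there is no pole and the integral is absolutely convergent. (Imposing the quantum master equation on top of this prequantization is the separate problem addressed via the obstruction-deformation complex in Section~\ref{anomaly_section}.)
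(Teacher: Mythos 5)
Your proposal is correct and takes essentially the same approach as the paper, which states this theorem without a detailed proof and instead points to exactly the ingredients you use: Costello's effective-field-theory framework, the explicit (factorized) propagator, and the evaluation of Feynman weights as integrals over the Axelrod--Singer/Kontsevich compactifications of configuration spaces, whose convergence gives the absence of counter-terms and whose only input is the contractible choice of gauge fixing data.
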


The explicit computation involves a nice description of the propagator, and consequently a computation of the Feynman weights, using partial compactifications of the configurations spaces $\mr{Conf}_m(\RR^n)$ first constructed by Kontsevich \cite{KontsevichECM}.

It will be useful to concretely describe the ring of local functionals associated to the topological AKSZ theory~$\mc E_L$. 
See \cite[Chapter 5, Section 10]{CostelloBook} for more.

\begin{lemma} \label{AKSZ_functionals_lemma}
The ring $\OO_{\mr{loc}}(\mc E_L)$ of local functionals for the theory $\mc E_L$ on an $n$-manifold $M$ is quasi-isomorphic to the shifted de Rham complex $(\Omega^\bullet(M) \otimes \mr C^\bullet_{\mr{red}}(L)[n], \d_{\mr{dR}} \otimes 1 + 1 \otimes \d_{\mr{CE}})$.
\end{lemma}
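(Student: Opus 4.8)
The plan is to compute the complex of local functionals directly from its definition and then identify the result with the stated shifted de Rham complex. Recall that a local functional on $\mc E_L$ is, by definition, a Lagrangian density built from the fields and finitely many derivatives, modulo total derivatives; algebraically, $\OO_{\mr{loc}}(\mc E_L)$ is the quotient of the space of densities valued in $\mr C^\bullet_{\mr{red}}$ of the $\infty$-jet bundle of $\mc E_L$ by the image of the de Rham differential on $M$, shifted so that densities sit in the appropriate degree. The first step is therefore to unwind this description for the specific sheaf $\mc E_L = \Omega^\bullet(M)\otimes L$. Since the fields are differential forms, the $\infty$-jet bundle of $\Omega^\bullet(M)$ is a $\mc D_M$-module that is (by the formal Poincar\'e lemma on jets, or equivalently the contractibility of the de Rham complex of the formal disk) quasi-isomorphic to the trivial $\mc D_M$-module $C^\infty_M$ placed in degree $0$. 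This is the key input that collapses the jet-theoretic data: up to quasi-isomorphism, the only ``function of the $\infty$-jet of a form and its derivatives'' is a function of the form itself, because all the higher jets of the de Rham complex are killed.

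Next I would assemble the Chevalley--Eilenberg side. Local functionals are functionals on $\mc E_L[1]$, so they are built from $\mr C^\bullet_{\mr{red}}(L)$ — the reduced CE cochains of $L$ — tensored with densities on $M$ and with the jet data of the $\Omega^\bullet(M)$ factor. Combining this with the previous step, the density-valued jet complex is quasi-isomorphic to $\mr{Dens}_M \otimes \mr C^\bullet_{\mr{red}}(L)$ as a $\mc D_M$-module, with differential $\d_{\mr{CE}}$ coming from the $L_\infty$ brackets (the de Rham part of the differential on $\mc E_L$ contributes nothing new after passing to the quasi-isomorphic model, being absorbed into the jet contractibility). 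The definition of $\OO_{\mr{loc}}$ then says we take the de Rham complex $\Omega^\bullet(M)$ tensored with this (to pass from densities back to all forms while recording the total-derivative quotient as a homological operation), with a degree shift by $[n]$ coming from the convention that top-degree densities are the ``honest'' Lagrangians. This yields exactly $(\Omega^\bullet(M)\otimes \mr C^\bullet_{\mr{red}}(L)[n],\ \d_{\mr{dR}}\otimes 1 + 1\otimes \d_{\mr{CE}})$, and the quasi-isomorphism class is unchanged because tensoring a quasi-isomorphism of complexes of $C^\infty_M$-modules with $\Omega^\bullet(M)$ preserves quasi-isomorphism (the de Rham complex is flat over $C^\infty_M$ in the relevant sense, or one argues locally and uses a partition of unity / soft sheaf argument).

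I expect the main obstacle to be the bookkeeping of degree shifts and the precise passage from ``densities modulo total derivatives'' to ``the de Rham complex as a resolution of that quotient.'' Concretely, one must verify that the naive quotient $\mr{Dens}_M/\d_{\mr{dR}}(\Omega^{n-1})$ is computed, up to quasi-isomorphism and a shift, by the full complex $\Omega^\bullet(M)[n]$ — this is where the contractibility of $\Omega^\bullet(M)$ in positive form-degree (away from $H^\bullet_{\mr{dR}}$) and the careful tracking of the $[n]$ and the CE-grading interact. This is exactly the content developed in \cite[Chapter 5, Section 10]{CostelloBook}, so the cleanest route is to cite the general formula for $\OO_{\mr{loc}}$ of a theory whose space of fields is of the form $\Omega^\bullet(M)\otimes V$ for a graded vector space $V$, and then specialize $V = L[1]$, checking only that the $L_\infty$ structure on $\mc E_L$ induces precisely the CE differential $\d_{\mr{CE}}$ on the $\mr C^\bullet_{\mr{red}}(L)$ factor. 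The remaining verifications — that the de Rham differential on $M$ survives as the stated $\d_{\mr{dR}}\otimes 1$ term, and that no cross-terms appear — are then routine.
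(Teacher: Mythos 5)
Your proposal is correct and follows essentially the same route as the paper, whose proof is simply a citation to \cite[Lemma 3.5.4.1]{Book2}: the argument behind that citation is exactly the one you sketch, namely the formal Poincar\'e lemma collapsing the jets of $\Omega^\bullet(M)$ to $C^\infty_M$, followed by the identification of ``densities modulo total derivatives'' with the shifted de Rham complex via the (derived) tensor over $\mc D_M$, with the $L_\infty$ brackets inducing $\d_{\mr{CE}}$ on the $\mr C^\bullet_{\mr{red}}(L)$ factor. Since you also ultimately defer to the general formula in \cite[Chapter 5, Section 10]{CostelloBook} for the delicate jet-theoretic bookkeeping, your write-up and the paper's proof rest on the same external result.
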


For $M = \RR^n$, the Poincar\'e lemma then ensures that 
\[
\OO_{\mr{loc}}(\mc E_L) \simeq C^\bullet_{\mr{red}}(L)[n].
\]
In particular, for topological BF theories or Chern-Simons theories for gauge Lie algebra $\gg$, deformations and anomalies correspond to cocycles of Lie algebra cohomology groups for~$\gg$. 
These are well-known for semisimple Lie algebras.

\begin{proof}
See \cite[Lemma 3.5.4.1]{Book2}.
\end{proof}

\section{The de Rham Isometry Action} \label{dR_iso_section}

From now on, let $M = \RR^n$.  We will study anomalies for the action of the isometry group $\ISO(n) = \SO(n) \ltimes \RR^n$ of~$\RR^n$. 
Let $\mf{iso}(n)$ denote the Lie algebra of~$\ISO(n)$.

\begin{definition}
If $\gg$ is a Lie algebra, define $\gg_{\mr{dR}}$ to be the dg Lie algebra whose underlying graded vector space is $\gg[1] \oplus \gg$, with differential given by the identity, and Lie bracket given by the bracket on $\gg$ and the adjoint action of $\gg$ on $\gg[1]$.
\end{definition}

\begin{remark}
This dg Lie algebra $\gg_{\mr{dR}}$ is homotopy equivalent to a trivial Lie algebra.  On the other hand, it has an important interpretation from the point of moduli spaces:
its associated formal moduli space offers a useful model of the {\em de Rham space} $\mr B\gg_{\mr{dR}}$ of the formal moduli space $\mr B\gg$.
In more explicit terms, note that there is a natural map of dg Lie algebras $\gg \to \gg_{\mr{dR}}$. 
A representation of $\gg_{\mr{dR}}$ pulls back to a representation of $\gg$,
but with an explicit trivialization (up to chain homotopy).
Indeed, we can view the representations of $\gg_{\mr{dR}}$ as the representations of $\gg$ equipped with a homotopical trivialization.
\end{remark}

Every topological AKSZ theory on $\RR^n$ has a natural action of $\mf{iso}(n)$ by the Lie derivative action of vector fields, since this Lie algebra acts canonically on the de Rham complex.
This action extends canonically to an action by $\mf{iso}(n)_{\mr{dR}}$,
where the component $\mf{iso}(n)[1]$ acts by contraction of vector fields with differential forms, thanks to Cartan's formula.  
This action can be encoded by a current, in the sense of Noether, as follows.
Consider the degree -1 local functional
\[S_{\mr{eq}} \in \mr C^\bullet(\mf{iso}(n)_{\mr{dR}}, \OO_{\mr{loc}}(\mc E_L))\]
defined by the formula
\[S_{\mr{eq}}(\wt X, X)(A) = S(A) - \int \left(\langle A \wedge \iota_{\wt X} A \rangle + \langle A \wedge L_{X} A \rangle \right).\] 
Here $(\wt X, X)$ is an element of $\mf{iso}(n)_{\mr{dR}}$, $A$ is an element of $\mc E_L$, and $\langle-,- \rangle$ denotes the symplectic pairing on $L$.
This current determines a derivation $\{S_{\mr{eq}}, -\}$ acting on the classical observables, as we now verify.

\begin{prop} \label{isometry_action_prop}
There is a map of dg Lie algebras from $\mf{iso}(n)_{\mr{dR}}$ to vector fields on the formal moduli space $\mc E_L$ given by $\{S_{\mr{eq}},-\}$. 
In particular, this local functional $S_{\mr{eq}}$ satisfies the equivariant classical master equation.
\end{prop}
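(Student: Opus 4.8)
The plan is to deduce both assertions from the \emph{equivariant classical master equation} for $S_{\mr{eq}}$. Recall (see \cite{CostelloBook, ElliottSafronov}) that for a dg Lie algebra $\mf{g}$ acting on a classical BV theory with fields $\mc E_L$ and action $S$, an element $S_{\mr{eq}} \in \mr C^\bullet(\mf{g}, \OO_{\mr{loc}}(\mc E_L))$ reducing to $S$ modulo the augmentation ideal of $\mr C^\bullet(\mf{g})$ defines an $L_\infty$ morphism $\{S_{\mr{eq}}, -\}$ from $\mf{g}$ to the dg Lie algebra $\mr{Vect}(\mc E_L)$ of vector fields on $\mc E_L$ precisely when
\[
\d_{\mr{CE}} S_{\mr{eq}} + \tfrac12 \{S_{\mr{eq}}, S_{\mr{eq}}\} = 0.
\]
Since $\mf{iso}(n)_{\mr{dR}}$ has vanishing higher brackets and $S_{\mr{eq}} = S - I$ is at most linear in $\mf{iso}(n)_{\mr{dR}}^\vee$, such an $L_\infty$ morphism is automatically strict, hence a genuine map of dg Lie algebras; so it suffices to verify the displayed equation.

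First I would isolate the two quadratic functionals making up the current: write $I = I_\iota + I_L$ with $I_\iota(\wt X)(A) = \int \langle A \wedge \iota_{\wt X} A \rangle$ and $I_L(X)(A) = \int \langle A \wedge L_X A \rangle$. Because $\int_{\RR^n} \iota_X(-) = 0$ (contraction lowers form degree) and $\int_{\RR^n} L_X(-) = 0$ (Cartan's formula and Stokes), the operators $\iota_{\wt X}$ and $L_X$ on $\mc E_L$ are graded skew-adjoint for the pairing $A, B \mapsto \int \langle A \wedge B \rangle$ underlying the antibracket; hence $\{I(\wt X, X), -\}$ is exactly the linear vector field $A \mapsto \iota_{\wt X} A + L_X A$, i.e.\ the action of $\mf{iso}(n)_{\mr{dR}}$ on $\mc E_L$ described above. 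This pins down the weight-one part of $\{S_{\mr{eq}}, -\}$.

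Next I would expand the master equation by weight in $\mr C^\bullet(\mf{iso}(n)_{\mr{dR}})$. In weight zero it is the classical master equation $\{S, S\} = 0$, which holds because $\mc E_L$ is a cyclic $L_\infty$ algebra. In weight one it expresses the compatibility of $\{S_{\mr{eq}}, -\}$ with differentials, which I expect to rest on three ingredients: (i) the $\ISO(n)$-invariance of $S$, which kills the $\{S, I_L\}$ term; (ii) the vanishing of the contribution of the higher $L_\infty$ brackets of $S$ to $\{S, I_\iota\}$ --- since $\iota_{\wt X} = \iota_X \otimes 1$ is an odd derivation of the wedge product and commutes past the $L$-brackets, cyclic invariance of the pairing rewrites each such term as a multiple of $\int \iota_X(-)$, which vanishes; and (iii) Cartan's magic formula $L_X = [\d_{\mr{dR}}, \iota_X]$ together with $[\d_L, \iota_{\wt X}] = 0$, transgressed to local functionals, which is precisely the statement that $\d_{\mr{CE}}$ --- encoding the differential $\wt X \mapsto X$ on $\mf{iso}(n)_{\mr{dR}}$ --- carries $I_\iota$ to $I_L$ up to the kinetic part of $\{S, -\}$. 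In weight two it expresses compatibility with brackets: $\tfrac12 \{I, I\}$ produces the Hamiltonian functionals of the graded commutators $[L_X, L_Y]$, $[L_X, \iota_{\wt Y}]$, $[\iota_{\wt X}, \iota_{\wt Y}]$, while $\d_{\mr{CE}} I$ produces those of $L_{[X,Y]}$, $\iota_{[X, \wt Y]}$, $\iota_{[\wt X, \wt Y]}$, and the two match via the standard identities $[L_X, L_Y] = L_{[X,Y]}$, $[L_X, \iota_Y] = \iota_{[X,Y]}$, $[\iota_X, \iota_Y] = 0$ --- which are exactly the bracket and the adjoint action defining $\mf{iso}(n)_{\mr{dR}}$. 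In weights $\ge 3$ there is nothing to check, since $S_{\mr{eq}}$ is linear in $\mf{iso}(n)_{\mr{dR}}^\vee$.

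The main obstacle will be the weight-one identity. The weight-zero and weight-two equations hold for essentially formal reasons --- the AKSZ classical master equation and the closure of the Cartan calculus on $\Omega^\bullet(\RR^n)$ --- but weight one genuinely mixes the two currents, and carrying through the bookkeeping of signs (and of the various degree shifts) needed to see that $\d_{\mr{CE}}$ exchanges $I_\iota$ and $I_L$ correctly, once the kinetic term of $\{S, -\}$ is accounted for, is where Cartan's formula, integration by parts over $\RR^n$, and the simultaneous vanishing of $\int \iota_X(-)$ and $\int L_X(-)$ all have to be combined.
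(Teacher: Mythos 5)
Your proposal is correct and follows essentially the same route as the paper: you verify the equivariant classical master equation component by component in the background fields $(\wt X, X)$, using the classical $\mr{ISO}(n)$-invariance of $S$, Cartan's magic formula, $\iota_{\wt X}^2 = 0$, and the standard Cartan-calculus commutators, which is exactly the content of the paper's four displayed identities. Your additional observation that cyclicity of the pairing and $\int_{\RR^n} \iota_{\wt X}(-) = 0$ kill the contributions of $\d_L$ and of the higher $L_\infty$ brackets to $\{S, J_{\wt X}\}$ is a detail the paper leaves implicit when it identifies $-\langle A \wedge [\d, \iota_{\wt X}]A\rangle$ with $-\{S, J_{\wt X}\}(A)$.
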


\begin{proof}
The functional $S$ satisfies the classical master equation by assumption, so we only need to consider terms in the equivariant classical master equation with a non-trivial dependence on the auxiliary (or background) fields $X$ or $\wt X$ from~$\mf{iso}(n)_{\mr{dR}}$.  

Write $I_X(A) = -\langle A \wedge L_{X} A \rangle$, and $J_{\wt X}(A) = -\langle A \wedge \iota_{\wt X} A \rangle$.  
We will show that the following equations hold: 
\begin{align}
\{S, I_X\} &= 0 \\
\frac 12 \{I_X, I_X\} + \d_{\mr{CE}}I_{X} &= 0 \\
\{S, J_{\wt X}\} + \{I_X, J_{\wt X}\} + \d_{\mr{CE}}J_{\wt X} &= 0 \\
\frac 12 \{J_{\wt X}, J_{\wt X}\} &= 0.
\end{align}
Equations (1) and (2) together say that $\mc E_L$ is $\mf{iso}(n)$-equivariant, which follows by observing that there is a smooth action of the \emph{Lie group} $\mr{ISO}(n)$ on $\mc E_L$ by isometries of $\RR^n$, which is infinitesimally generated by the functional~$I_X$.

Equation (4) is straightforward: it follows from the fact that $\iota_{\wt X}^2 = 0$.  It remains to deduce equation~(3), which is a consequence of Cartan's formula.  Indeed, 
\begin{align*}
\d_{\mr{CE}}J_{\wt X}(A) &= -\left\langle A \wedge (L_{\wt X}(A) + \iota_{[X,\wt X]}(A)) \right\rangle \\
&= -\left\langle A \wedge ([\d, \iota_{\wt X}](A) + [L_X,\iota_{\wt X}](A)) \right\rangle \\
&= -\{S, J_{\wt X}\}(A) - \{I_X, J_{\wt X}\}(A).
\end{align*}

\end{proof}

Let us restrict this action to an action of the ordinary algebra of isometries $\mf{iso}(n)$ alone, acting by the Lie derivative.  
This action can be defined at the quantum level, and it naturally comes from a smooth action of the \emph{group}~$\mr{ISO}(n)$.  

\begin{prop} \label{quantum_iso_action_prop}
There is a smooth classical action of the Lie group $\mr{ISO}(n)$ on the topological AKSZ theory~$\mc E_L$.  This action can be lifted to an action at the quantum level. 
\end{prop}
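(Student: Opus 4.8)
The plan is to build the statement in two stages: first establish the smooth classical $\mr{ISO}(n)$-action, then show it lifts through the quantization. For the classical action, the point is that $\mr{ISO}(n)$ acts on $\RR^n$ by diffeomorphisms, hence acts on the de Rham complex $\Omega^\bullet(\RR^n)$ by pullback; tensoring with the identity on $L$ gives a smooth action on the underlying graded space $\Omega^\bullet(\RR^n) \otimes L[1]$ of fields of $\mc E_L$. I would check that this action preserves the $L_\infty$ structure: the differential $\d_{\mr{dR}} \otimes 1 + 1 \otimes \d_L$ commutes with pullback because $\d_{\mr{dR}}$ is natural under diffeomorphisms, and the higher brackets, being built from the wedge product of forms and the brackets on $L$, are likewise preserved since pullback is an algebra map for the wedge product. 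Finally the action preserves the local symplectic pairing because isometries preserve the orientation-compatible volume form used in the integration over $\RR^n$ (here one uses that $\mr{ISO}(n) = \mr{SO}(n) \ltimes \RR^n$ consists of orientation-preserving isometries). Infinitesimally this recovers the Lie derivative action generated by $I_X$ from Proposition~\ref{isometry_action_prop}, so the classical action is exactly the integrated form of the $\mf{iso}(n)$-equivariance encoded by equations (1) and (2) there.

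For the quantum lift, I would invoke the prequantization theorem stated above: every topological AKSZ theory prequantizes to all orders, uniquely up to a contractible choice of homotopies, with no counter-terms. The key observation is that the Feynman weights defining the prequantization are built from the propagator, which in turn comes from Kontsevich's configuration-space compactifications of $\mr{Conf}_m(\RR^n)$; these compactifications and the associated forms are manifestly natural under the $\mr{ISO}(n)$-action on $\RR^n$. Because the prequantization is unique up to a contractible choice, the $\mr{ISO}(n)$-action on the space of prequantizations has a fixed point up to coherent homotopy — equivalently, one can choose the prequantization $\mr{ISO}(n)$-equivariantly. Since there are no counter-terms and the regularization is the equivariant point-splitting / configuration space method, the resulting effective action functionals carry a strict (or at worst homotopy-coherent) smooth $\mr{ISO}(n)$-action compatible with renormalization group flow, and one checks this action is compatible with the quantum master equation at each scale because the BV Laplacian and the RG flow operators are themselves built $\mr{ISO}(n)$-equivariantly from the propagator.

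The main obstacle I anticipate is the passage from "the space of prequantizations is contractible" to "there is a genuinely smooth $\mr{ISO}(n)$-equivariant point in it" — that is, promoting a statement about the homotopy type of the space of choices to an actual equivariant choice with the required smoothness in the group variable. One has to be careful that the $\mr{SO}(n)$ part acts by a compact group (so averaging-type arguments or the existence of equivariant sections of contractible fiber bundles apply cleanly) while the translation part $\RR^n$ acts by a noncompact group, though translations act particularly simply on $\RR^n$ and on the propagator. I would handle this by noting that the relevant propagator can be written down in closed form, $\mr{ISO}(n)$-equivariantly by inspection, so that the prequantization constructed from it via the explicit Feynman-weight formula is automatically equivariant, sidestepping any abstract fixed-point argument. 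The quantum master equation for this equivariant prequantization then follows, as at the classical level, from the equivariance of all the operations involved, together with the fact (already part of the cited quantization results) that the quantization exists at all.
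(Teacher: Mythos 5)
Your proposal is correct and follows essentially the same route as the paper: the paper cites \cite[Proposition 9.1.1.2]{Book2}, which reduces the quantum lift to checking that the effective interaction for every parametrix is isometry invariant, and this in turn follows from the invariance of the classical interaction and of the gauge-fixing operator $\d^*$ --- precisely your observation that the explicitly constructed propagator and Feynman weights are $\mr{ISO}(n)$-equivariant by inspection, so no abstract fixed-point argument about the contractible space of prequantizations is needed. The only cosmetic difference is that your detours through the quantum master equation and the contractibility argument are not part of the paper's (much shorter) proof.
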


We applied this result in a specific family of examples in \cite[Proposition 5.10]{EGWHigherDef}, using the same argument.

\begin{proof}
This claim follows from the result \cite[Proposition 9.1.1.2]{Book2}.  This proposition proves the given claim for the group of translations, but as remarked following the result in {\it loc. cit.}, the same argument works for the full group of isometries.  According to the cited result, it suffices to prove that the effective interaction associated to any parametrix is isometry invariant.  In turn, it is enough for the classical interaction, along with the choice $\d^*$ of gauge-fixing operator to be isometry invariant.
\end{proof}

Likewise, let us restrict the $\mf{iso}(n)_{\mr{dR}}$ action to an action of $\RR^n_{\mr{dR}}$.  We can, again, define this action at the quantum level.

\begin{prop} \label{translation_dR_no_obstruction_prop}
There is no anomaly obstructing the lift of the classical action of $\RR^n_{\mr{dR}}$ on the topological AKSZ theory $\mc E_L$ to the quantum level. 
\end{prop}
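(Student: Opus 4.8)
The plan is to reduce the claim to the statement that an explicit cochain-level equivariant interaction exists and is closed under the relevant differential, by combining the classical result of Proposition~\ref{isometry_action_prop} with the renormalization machinery invoked in Proposition~\ref{quantum_iso_action_prop}. Concretely, the obstruction to quantizing the $\RR^n_{\mr{dR}}$-action lives in a cohomology group of the form $\mr H^1\big(\RR^n_{\mr{dR}}, \OO_{\mr{loc}}(\mc E_L)\big)$ (the degree-$1$ part of the equivariant deformation complex, built from the Chevalley--Eilenberg complex of $\RR^n_{\mr{dR}}$ with coefficients in local functionals), so the first step is to identify this complex explicitly. Since $\RR^n_{\mr{dR}} = \RR^n[1]\oplus\RR^n$ with differential the identity, this dg Lie algebra is contractible; the key point is that it is contractible as a dg Lie algebra, so its Chevalley--Eilenberg cochains $\mr C^\bullet(\RR^n_{\mr{dR}})$ are quasi-isomorphic to the ground field concentrated in degree $0$. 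Therefore $\mr C^\bullet(\RR^n_{\mr{dR}}, \OO_{\mr{loc}}(\mc E_L)) \simeq \OO_{\mr{loc}}(\mc E_L)$, and the relevant obstruction group vanishes in the degrees where anomalies could appear.

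More carefully, I would proceed as follows. First, recall from Proposition~\ref{quantum_iso_action_prop} that the underlying $\RR^n$-action (translations, acting by Lie derivative, equivalently by the de Rham differential being translation-invariant) is already quantized with no anomaly. Second, the lift from the $\RR^n$-action to the $\RR^n_{\mr{dR}}$-action amounts to choosing a quantization of the current $J_{\wt X}(A) = -\langle A \wedge \iota_{\wt X} A\rangle$ encoding the contraction operators, compatibly with the RG flow and satisfying the quantum master equation equivariantly. The obstruction to doing so order by order in $\hbar$ is a cocycle in $\mr C^\bullet(\RR^n_{\mr{dR}}, \OO_{\mr{loc}}(\mc E_L))$ of cohomological degree $1$ (one more than the classical equation $S_{\mr{eq}}$ satisfies). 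Third, I would invoke the contractibility of $\RR^n_{\mr{dR}}$: the inclusion $0 \hookrightarrow \RR^n_{\mr{dR}}$ and projection $\RR^n_{\mr{dR}}\to 0$ are mutually inverse up to homotopy as dg Lie algebras (the contracting homotopy being the obvious one that kills $\RR^n$ against $\RR^n[1]$), so the induced map on Chevalley--Eilenberg complexes with any coefficients is a quasi-isomorphism. Hence every positive-degree cohomology class is trivial, the obstruction vanishes at each stage, and the quantization lifts.

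The main subtlety, and the step I expect to require the most care, is justifying that the obstruction genuinely lives in this contractible complex rather than in some larger complex that also sees the nontrivial $\RR^n$-dependence or the full geometry of the parametrices. This is exactly the kind of point handled in \cite[Section 9]{Book2}: one must check that the gauge-fixing data ($\d^*$) and the classical interaction are $\RR^n_{\mr{dR}}$-invariant in the appropriate homotopical sense, so that the effective interactions for any parametrix carry a compatible $\RR^n_{\mr{dR}}$-action and the obstruction cocycle is genuinely equivariant and local. Granting that—which follows from Cartan's formula and the explicit form of $S_{\mr{eq}}$ computed in Proposition~\ref{isometry_action_prop}, exactly as in the proof of Proposition~\ref{quantum_iso_action_prop}—the vanishing is immediate from contractibility. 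I would also remark that the contrast with the full $\mf{iso}(n)_{\mr{dR}}$-action is instructive: there the $\so(n)$-part is not contractible, and it is precisely $\mr H^\bullet(\so(n))$ that contributes the Pontryagin-class framing anomaly discussed in the introduction.
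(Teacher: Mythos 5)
Your reduction to a cohomological vanishing statement has a genuine gap, and it occurs at the exact point you flag as needing care. Two problems. First, the quasi-isomorphism you invoke, $\mr C^\bullet(\RR^n_{\mr{dR}}, \OO_{\mr{loc}}(\mc E_L)) \simeq \OO_{\mr{loc}}(\mc E_L)$, does not give what you claim: by Lemma \ref{AKSZ_functionals_lemma} the right-hand side is $\mr C^\bullet_{\mr{red}}(L)[n]$, which is not concentrated in degree $0$, and its degree-$1$ cohomology (the ordinary, non-equivariant anomaly group) need not vanish. So ``every positive-degree cohomology class is trivial'' is false for the full equivariant complex; contractibility of $\RR^n_{\mr{dR}}$ only says the equivariant obstruction group agrees with the non-equivariant one, which is not the content of the proposition. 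Second, if instead you locate the obstruction where it actually lives for this lifting problem --- we keep the already-constructed quantization with its strict translation (indeed $\mr{ISO}(n)$) invariance from Proposition \ref{quantum_iso_action_prop}, and only ask to add the terms depending on the degree $-1$ background fields $\wt X$ --- then the relevant complex is the fiber of $\mr{Act}_{\RR^n_{\mr{dR}}} \to \mr{Act}_{\RR^n}$, the exact analogue of the complex $\mc C_{n,L}$ of Lemma \ref{description_of_fiber_lemma} with $\so(n)$ replaced by the abelian algebra $\RR^n$. Running the same spectral sequence as in Proposition \ref{E2_page_prop} and Lemma \ref{d3_differential_lemma}, one finds this fiber is \emph{not} acyclic: for instance there are surviving classes linear in the degree-$2$ generators dual to $\RR^n[1]$, pairing with $\mr H^{n-1}_{\mr{red}}(L)$ in total degree $1$. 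So a purely homological argument from contractibility of $\RR^n_{\mr{dR}}$ cannot establish the proposition; the nontrivial framing anomaly computed in Theorem \ref{obstruction_theorem} should already warn you that ``the de Rham direction is contractible, hence no obstruction'' is not a valid inference in this relative setting.

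What the paper does instead is show that the anomaly \emph{cocycle itself} vanishes identically, not merely in cohomology. The potential anomaly is computed from Feynman diagrams with one vertex labelled by $J_{\wt X}(A) = -\langle A \wedge \iota_{\wt X} A\rangle$ with $\wt X$ a constant (translation-generating) vector field; decomposing each weight according to which internal edge carries the heat kernel, the terms where that edge is not adjacent to the special vertex vanish because $\d^*$ commutes with $\iota_{\dd_j}$, and the two terms where it is adjacent cancel pairwise after integration by parts. Your proposal contains no substitute for this analytic input, and without it (or some refined representation-theoretic argument genuinely constraining where the anomaly can land) the claimed vanishing is not established.
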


\begin{proof}
We will prove this claim by thinking about the weights of Feynman diagrams that would generate an anomaly obstructing the quantization of such an action.  Consider a Feynman diagram of shape $\Gamma$ containing a vertex at position $x \in \RR^n$ labelled by the interaction $J_{\wt X}(A)$, where $\wt X$ is the vector field generating a translation.

The weight of the diagram $\Gamma$ can be decomposed as a sum of weights $W_{\Gamma, e}$, where a single internal edge $e$ of $\Gamma$ is labelled by the heat kernel, and the remaining edges are all labelled by the propagator.  Let us show that this sum vanishes.  There are two classes of summand:

 \begin{enumerate}
 \item Suppose we label $\Gamma$ so that the special edge $e$ labelled by $K$ is not adjacent to the vertex at $x$. Then the associated Feynman weight is a limit of terms of the form
 \[\int_{t \in [\eps, \Lambda]} \int_{(x_1, \ldots, x_{N-1}) \in (\RR^n)^{N-1}} \left(\int_{x \in \RR^n} \d^*K_t(x-x_1) \wedge \iota_{\dd_j} \d^*K_t(x-x_2)\right) \wedge F(x_1, \ldots, x_{N-1}),\]
 where $F$ is some differential form (we won't need its explicit form, only the fact that it is independent of the location $x$).  Because $\d^*$ and $\iota_{\dd_j}$ commute, the term inside the parentheses vanishes, so $W_{\Gamma, e}=0$.  
 \item There are two labellings where the special edge $e$ labelled by $K$ is adjacent to $x$, say $e=e_1$ or $e=e_2$.  The weights of these two labelled diagrams differ by a sign, and therefore they cancel when we sum over all labellings.  Indeed, by integration by parts in the $x$ variable
 \begin{align*}
W_{\Gamma, e_1} &= \int_{t \in [\eps, \Lambda]} \int_{(x_1, \ldots, x_{N-1}) \in (\RR^n)^{N-1}} \left(\int_{x \in \RR^n} \d^*K_t(x-x_1) \wedge \iota_{\dd_j} K_t(x-x_2)\right) \wedge F(x_1, \ldots, x_{N-1}) \\
& = - \int_{t \in [\eps, \Lambda]} \int_{(x_1, \ldots, x_{N-1}) \in (\RR^n)^{N-1}} \left(\int_{x \in \RR^n} K_t(x-x_1) \wedge \iota_{\dd_j} \d^*K_t(x-x_2)\right) \wedge F(x_1, \ldots, x_{N-1}) \\
&= - W_{\Gamma, e_2}.
 \end{align*}
 \end{enumerate}
 
As a result, the sum of the weights $W_{\Gamma, e}$ over all edges vanishes, which implies that the anomaly for the $\RR^n_{\mr{dR}}$ action vanishes as claimed.
\end{proof}

So, putting this together, we find it is always possible to quantize a topological AKSZ theory on $\RR^n$ equivariantly for the action of the dg Lie algebra $\so(n) \ltimes (\RR^n_{\mr{dR}})$.  In the next section we will fix an equivariant quantization for this dg Lie algebra, and study lifts to $\mr{iso}(n)_{\mr{dR}}$-equivariant quantizations.

\section{\texorpdfstring{$\EE_n$}{En}-Algebras from Topological AKSZ theories} 
\label{En_section}

Let us briefly review the relationship between topological field theories and $\bb E_n$ algebras as described in \cite{ElliottSafronov}.  Consider a classical field theory $\mc E$ on $\RR^n$, and suppose that $\mc E$ admits a smooth action of $\RR^n_{\mr{dR}}$ as discussed in the previous section.  For example, $\mc E$ might be a topological AKSZ theory.  We can often describe either the classical or the quantum \emph{observables} of the field theory using the language of homotopical algebra.

Recall that an \emph{$\bb E_n$-algebra} is defined as a module, in the category of cochain complexes, over the operad of little $n$-disks.  A \emph{framed} $\bb E_n$-algebra is an $\bb E_n$-algebra equipped with a compatible action of the group $\SO(n)$ of rotations.  In this section we will discuss the realization of $\bb E_n$-algebras as a special case of the theory of factorization algebras, as developed in \cite{Book1, Book2} in the context of quantum field theory. 

Let us write $\obscl(\mc E)$ for the factorization algebra of classical observables of the theory $\mc E$.  This factorization algebra inherits a smooth action of $\RR^n_{\mr{dR}}$ from the action on the classical fields.  If, furthermore, there is no anomaly obstructing the action of $\RR^n_{\mr{dR}}$ at the quantum level -- for instance, for topological AKSZ theories by Proposition \ref{translation_dR_no_obstruction_prop} -- then there an action of $\RR^n_{\mr{dR}}$ on the factorization algebra $\obsq(\mc E)$ of \emph{quantum} observables.  This is exactly the context in which we can invoke the following result.

\begin{definition}
Let $\obs$ be a factorization algebra on $\RR^n$ with a smooth action of $\RR^n_{\mr{dR}}$.
It is \emph{rescaling-invariant} if the structure map 
\[
\obs(B_r(0)) \to \obs(B_R(0))
\] 
for the inclusion of concentric balls is a quasi-isomorphism for any $r < R$.
\end{definition}

\begin{theorem}[{\cite[Corollary 2.30]{ElliottSafronov}}]
Let $\obs$ be a rescaling-invariant factorization algebra on $\RR^n$ with a smooth action of $\RR^n_{\mr{dR}}$.
Then the cochain complex $\obs(B_1(0))$ of observables on the unit ball can be canonically equipped with the structure of an $\bb E_n$ algebra.
\end{theorem}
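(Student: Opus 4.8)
Set $V := \obs(B_1(0))$. The plan is to reduce the statement to the standard equivalence between $\EE_n$-algebras and locally constant factorization algebras on $\RR^n$ (see, e.g., \cite{Book1}), extracting the hypotheses of that equivalence from rescaling-invariance together with the smooth $\RR^n_{\mr{dR}}$-action. Concretely, I would build a $\mc D_n$-algebra structure on $V$, where $\mc D_n$ denotes the little $n$-disks operad, by taking the operation attached to a configuration $\underline D = (D_1, \dots, D_k)$ of disjoint round balls in $B_1(0)$ — a point of $\mc D_n(k)$ — to be the factorization structure map $\bigotimes_{i} \obs(D_i) \to \obs(B_1(0)) = V$ composed with canonical identifications $V \simeq \obs(D_i)$.

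The heart of the argument is a pair of invariance lemmas. First, for each round ball $B_r(x)$ one assembles the translation isomorphism $\obs(B_r(0)) \xrightarrow{\sim} \obs(B_r(x))$ from the underlying $\RR^n$-action, the rescaling quasi-isomorphisms $\obs(B_r(0)) \to \obs(B_R(0))$ supplied by the hypothesis, and the homotopies recording that infinitesimal translation is nullhomotopic (the $\RR^n[1]$-part of the $\RR^n_{\mr{dR}}$-action) into a zig-zag of quasi-isomorphisms between $\obs(B_r(x))$ and $V$. Since the space of round balls in $\RR^n$ is contractible (it is $\RR^n \times \RR_{>0}$), integrating this infinitesimal data along paths yields an identification $\obs(B_r(x)) \simeq V$ that is canonical up to coherent homotopy, with no monodromy. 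Second, for nested balls $B_r(x) \subseteq B_R(y)$ the structure map $\obs(B_r(x)) \to \obs(B_R(y))$ is a quasi-isomorphism, compatibly with these identifications: translate so that $y$ is the common center, use the first lemma to reduce the outer ball to $B_1(0)$, and then observe that sliding $x$ to the origin along a path of balls contained in $B_1(0)$ alters the structure map only by a chain homotopy built from the $\RR^n_{\mr{dR}}$-homotopies (isotopy-invariance of the structure maps), reducing the claim to the concentric case — which is exactly rescaling-invariance.

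Granting the lemmas, the rest is bookkeeping with the prefactorization axioms. The operation attached to $\underline D$ is the composite above; a priori it lands in the homotopy category, since ``shrinking'' a ball is a morphism only after inverting the rescaling quasi-isomorphisms, so one works $\infty$-categorically throughout. The coherence of the factorization structure maps (associativity under refinement of covers, and the symmetry isomorphisms) together with the second lemma applied to the subball into which one inserts a configuration shows that these operations respect operadic composition and the symmetric group actions of $\mc D_n$; equivalently, the structure maps of $\obs|_{B_1(0)}$ for inclusions of round subballs and their disjoint unions are quasi-isomorphisms, so the standard equivalence returns the desired $\EE_n$-algebra structure on $V$. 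The only choices made — the basepoint ball and the paths used to integrate the action — range over contractible spaces, and a map of such factorization algebras intertwining the $\RR^n_{\mr{dR}}$-actions visibly induces a map of $\EE_n$-algebras; together these give the word ``canonically.''

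I expect the main obstacle to be the first invariance lemma: upgrading the infinitesimal differential data of the smooth $\RR^n_{\mr{dR}}$-action, together with rescaling maps that are only asserted to be quasi-isomorphisms, into genuinely coherent identifications over the space of balls — not merely identifications valid after passing to cohomology. This is what forces the homotopy-coherent framework and makes essential use of smoothness of the action; by contrast, once the identifications and the second lemma are available, the operadic compatibilities in the last step are a routine unwinding of the axioms of a prefactorization algebra.
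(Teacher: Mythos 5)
This statement is not proved in the paper at all---it is imported verbatim from \cite[Corollary 2.30]{ElliottSafronov}---and your outline follows essentially the same route as that cited proof: use the smooth translation action together with the homotopy trivialization of its infinitesimal part (the $\RR^n[1]$-component of the $\RR^n_{\mr{dR}}$-action) and rescaling invariance to show that every structure map for an inclusion of balls is a quasi-isomorphism, i.e.\ that $\obs$ is locally constant on $\RR^n$, and then invoke the standard equivalence between locally constant factorization algebras on $\RR^n$ and $\bb E_n$-algebras. So your proposal is correct in strategy and matches the source; the only caution is that your hand-built $\mc D_n$-operations are redundant once local constancy is established, since the cited equivalence already packages exactly those coherences.
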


\begin{remark}
We will apply this result to the factorization algebra of quantum observables of a topological AKSZ theory, where the condition of rescaling invariance is automatically satisfied.  
At the classical level it is immediate from Lemma \ref{AKSZ_functionals_lemma}, since the de Rham complex is locally constant.  
When we quantize, as a graded vector space the quantum observables are isomorphic to 
\[\Omega^\bullet(U) \otimes \mr C_{\mr{red}}^\bullet(L)[n])[\![\hbar]\!],\]
and we need to observe that the quantum corrections to the differential on the factorization algebra of observables do not violate rescaling invariance.  We can see this using the spectral sequence associated to the filtration by $\hbar$ degree, 
whose $E_2$ page recovers the factorization algebra of classical observables.  
The rescaling map is a map of filtered complexes and induces a quasi-isomorphism on the $E_2$ page of this spectral sequence, so we therefore obtain a quasi-isomorphism at the $E_\infty$ page.
\end{remark}

If we can promote the smooth action of translations to a smooth action of rotations, then we can strengthen this result to provide a \emph{framed} $\bb E_n$ algebra structure.

\begin{theorem}[{\cite[Corollary 2.39]{ElliottSafronov}}]
Let $\obs$ be a rescaling-invariant factorization algebra on $\RR^n$ with a smooth action of $\mr{ISO}(n)_{\mr{dR}}$.
Then the cochain complex $\obs(B_1(0))$ of observables on the unit ball can be canonically equipped with the structure of an $\bb E_n^{\mr{fr}}$ algebra.
\end{theorem}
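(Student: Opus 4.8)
The plan is to bootstrap from the previous theorem (the one obtained from \cite[Corollary 2.30]{ElliottSafronov}): rescaling-invariance together with the smooth $\RR^n_{\mr{dR}}$-action already equips $A := \obs(B_1(0))$ with an $\bb E_n$-algebra structure, realized by an action of a suitable model for the little $n$-disks operad on $A$. Recall the mechanism: a configuration of disjoint round balls $B_{r_1}(x_1), \dots, B_{r_k}(x_k) \subset B_1(0)$ gives a factorization structure map $\bigotimes_i \obs(B_{r_i}(x_i)) \to \obs(B_1(0))$; rescaling-invariance identifies each $\obs(B_{r_i}(x_i))$ with $\obs(B_{r_i}(0)) \simeq A$, and the smooth translation action --- with the homotopy trivialization supplied by its de Rham enhancement --- furnishes the coherences needed to interpolate over the configuration space $\mr{Conf}_k(\RR^n) \simeq \bb E_n(k)$ and assemble these maps into a genuine operad action. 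To upgrade this to a $\bb E_n^{\mr{fr}}$-algebra I will use the description of the framed little $n$-disks operad as the semidirect product $\bb E_n^{\mr{fr}} \simeq \SO(n) \ltimes \bb E_n$ (due to Salvatore and Wahl): an algebra over it is precisely an $\bb E_n$-algebra internal to $\SO(n)$-representations, i.e.\ an $\bb E_n$-algebra equipped with an $\SO(n)$-action for which the structure maps are coherently $\SO(n)$-equivariant.

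First I would extract an $\SO(n)$-action on $A$. The smooth $\mr{ISO}(n)_{\mr{dR}}$-action restricts, on its rotational part, to an honest smooth action of $\SO(n) \subset \mr{ISO}(n)$ on the factorization algebra $\obs$; since every $g \in \SO(n)$ fixes the origin and carries the unit ball to itself, the induced map on observables restricts to a self-equivalence of $A = \obs(B_1(0))$, and smoothness packages these into a homotopy-coherent $\SO(n)$-action on $A$. I would also record the $\SO(n)$-equivariance of the factorization structure maps: for disjoint $U_i \subset B_1(0)$, the structure map for $\{gU_i\}$ agrees up to coherent homotopy with the structure map for $\{U_i\}$ conjugated by the $g$-action, this being part of the data of a smooth $\SO(n)$-action on a factorization algebra on $\RR^n$.

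Next I would check that the $\bb E_n$-action on $A$ is $\SO(n)$-equivariant in the sense demanded by the semidirect-product description. The three ingredients used to build the $\bb E_n$-structure --- rescaling about the origin, translation of balls, and the homotopy trivialization of translations --- are each compatible with the $\SO(n)$-action: rescalings and rotations about the origin commute, and translations and rotations interact precisely through the relations of $\mr{ISO}(n) = \SO(n) \ltimes \RR^n$, which hold because everything descends from a single smooth $\mr{ISO}(n)_{\mr{dR}}$-action on $\obs$. Equivalently and more invariantly: the smooth $\SO(n)$-action makes $\obs$ an $\SO(n)$-equivariant, rescaling-invariant factorization algebra on $\RR^n$, and the argument of \cite[Corollary 2.30]{ElliottSafronov} runs verbatim internally to $\SO(n)$-spaces, producing an action of $\SO(n) \ltimes \bb E_n = \bb E_n^{\mr{fr}}$ on $A$ whose underlying $\bb E_n$-action is the one already built. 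Either route yields the claimed framed $\bb E_n$-algebra, canonical because it is assembled from the contractible choices made at the unframed stage together with the given action.

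I expect the main obstacle to be homotopy-coherence bookkeeping rather than any single computation: one must realize the $\SO(n)$-action and the whole system of translation homotopies simultaneously on a sufficiently rigid model of the operad and of $\mr{End}(A)$, and verify that the $\bb E_n$-structure and the $\SO(n)$-action are not merely separately present but compatible in the semidirect-product sense --- that rotating a little disk corresponds, up to coherent homotopy, to applying the $\SO(n)$-action to the associated endomorphism. In the factorization-algebra framework this is exactly the content of the notion of a smooth $\mr{ISO}(n)_{\mr{dR}}$-action, so the substantive work lies in setting up that framework carefully; this is presumably why the statement is imported directly from \cite{ElliottSafronov} rather than reproved here.
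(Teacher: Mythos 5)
There is a genuine gap, and it sits at the exact point this paper is about. Note first that the paper does not prove this statement at all: it is imported verbatim from \cite[Corollary 2.39]{ElliottSafronov}, so the only comparison available is with that proof, whose broad skeleton (the Salvatore--Wahl description $\bb E_n^{\mr{fr}} \simeq \SO(n) \ltimes \bb E_n$, plus making the $\bb E_n$-structure rotation-equivariant) your proposal does share. The problem is your step ``every $g \in \SO(n)$ preserves the unit ball, and smoothness packages these self-equivalences into a homotopy-coherent $\SO(n)$-action on $A$.'' A smooth action of the Lie group $\SO(n)$ on a cochain complex gives an action of the discrete group (and, by differentiating, of the Lie algebra $\so(n)$ in degree $0$), but it does \emph{not} by itself produce a module structure over the chains $C_\bullet(\SO(n))$ of the homotopy type of $\SO(n)$, which is what the semidirect-product description of $\bb E_n^{\mr{fr}}$ demands for chain-complex-valued algebras. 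Over $\RR$ such an action requires odd-degree operators realizing the classes of $\mr H_\bullet(\SO(n))$ --- the analogue, for $S^1$, of the degree $-1$ operator $\iota$ with $[\d,\iota] = L$ in a mixed complex. These operators come precisely from the ``de Rham'' part of the hypothesis: the homotopy trivialization of the infinitesimal rotation action (the contractions $\iota_X$ for rotational vector fields), i.e.\ the datum distinguishing a smooth $\mr{ISO}(n)_{\mr{dR}}$-action from a smooth $\mr{ISO}(n)$-action together with an $\RR^n_{\mr{dR}}$-action. Your argument uses the de Rham enhancement only for translations and then discards it for rotations, so it under-uses the hypothesis and the key mechanism is unjustified.

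A quick sanity check shows the step cannot be repaired as stated: Propositions \ref{quantum_iso_action_prop} and \ref{translation_dR_no_obstruction_prop} of this paper show that the smooth $\mr{ISO}(n)$-action and the $\RR^n_{\mr{dR}}$-action always exist at the quantum level for topological AKSZ theories, with no anomaly. If those two pieces of data sufficed to produce a framed $\bb E_n$-structure, every such theory would carry one and the framing anomaly computed in Theorem \ref{obstruction_theorem} would be irrelevant --- contradicting, for instance, the well-known framing anomaly of $3$-dimensional Chern--Simons theory. The content of the cited Corollary 2.39 is exactly that acting by the de Rham stack $\SO(n)_{\mr{dR}}$ (smooth action plus trivialized infinitesimal action) is a model for acting by the Betti homotopy type of $\SO(n)$, and it is the quantum existence of that trivialization --- not of the smooth group action --- that the rest of this paper is devoted to obstructing.
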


Field theories provide our main source of factorization algebras, 
by the central result of \cite{Book2}:
a BV theory on $\RR^n$ determines a factorization algebra on $\RR^n$.
Hence a deformation of the theory determines a deformation of the factorization algebra, and
in fact there is a map from the deformation complex of the theory to the deformation complex of its factorization algebra of observables.
For this reason, if we want to show that a group acts smoothly on the observables,
it suffices to understand how it acts on the theory.
In particular, for this paper, we want to characterize when a quantization is $\mr{ISO}(n)_{\mr{dR}}$-equivariant. 

For a topological theory, as given by Definition~\ref{def E_L}, we have seen that the theory (and its usual quantization) is rescaling-invariant and $\mr{ISO}(n)$-equivariant,
and thus so is the factorization algebra of observables.
In fact, we have also shown that the translation action is homotopically trivial,
so what remains is to trivialize homotopically the $\so(n)$-action.

\section{Computation of Framing Anomalies} \label{anomaly_section}

Let's follow the procedure we just outlined in Section \ref{dR_iso_section}, using the classical action of $\mf{iso}(n)_\mr{dR}$ of Proposition \ref{isometry_action_prop} and the quantization of the algebras $\mf{iso}(n)$  and $\RR^n_{\mr{dR}}$  that we have already constructed in Proposition \ref{quantum_iso_action_prop} and Proposition \ref{translation_dR_no_obstruction_prop}, respectively.  
We would like to lift this to an action of all of $\mf{iso}(n)_{\mr{dR}}$ at the quantum level. Our main result is the following.

\begin{theorem} \label{obstruction_theorem}
Fix an $\so(n) \ltimes (\RR^n_{\mr{dR}})$-equivariant quantization of the topological AKSZ theory $\mc E_L$ associated to a cyclic $L_\infty$ algebra $L$, 
as described in Definition~\ref{def E_L}.  
The obstruction to lifting this quantization to an $\mr{iso}(n)_{\mr{dR}}$-equivariant quantization is given by an element in
\begin{equation}
\label{anomaly_cohomology_eqn}
\bigoplus_{i+j=n}\mr H^i_{\mr{red}}(\so(n)) \otimes \mr H^j_{\mr{red}}(L).
\end{equation}
The obstruction to lifting to an \emph{inner} $\mr{iso}(n)_{\mr{dR}}$-equivariant quantization is given by an element~of
\begin{equation}
\bigoplus_{i+j=n}\mr H^i_{\mr{red}}(\so(n)) \otimes \mr H^j(L).
\end{equation}
\end{theorem}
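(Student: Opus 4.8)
The plan is to treat this as an obstruction problem in equivariant BV quantization and to compute the relevant obstruction group explicitly. Write $\mf g = \mf{iso}(n)_{\mr{dR}}$ and $\mf g_0 = \so(n) \ltimes (\RR^n_{\mr{dR}})$. Proposition~\ref{isometry_action_prop} makes the classical theory $\mc E_L$ into a $\mf g$-equivariant theory, and Propositions~\ref{quantum_iso_action_prop} and~\ref{translation_dR_no_obstruction_prop} provide the quantization over the subalgebra $\mf g_0$; so the space of lifts to a $\mf g$-equivariant quantization is controlled by the relative equivariant local deformation complex $\mr C^\bullet(\mf g, \mf g_0; \OO_{\mr{loc}}(\mc E_L))[\![\hbar]\!]$ — the homotopy fibre of restriction along $\mf g_0 \hookrightarrow \mf g$, equipped with differential $\d_{\mr{CE}} + \{S_{\mr{eq}}, -\} + \hbar\Delta$ — and the obstruction is the degree-$1$ cohomology class measuring the failure of the equivariant quantum master equation. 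Because the classical equivariance already holds, this obstruction is of order $\hbar$. For the \emph{inner} statement one repeats the argument using the full space of local functionals $\Omega^\bullet(\RR^n) \otimes \mr C^\bullet(L)[n]$ in place of the reduced version appearing in Lemma~\ref{AKSZ_functionals_lemma}; this is exactly the difference between encoding the $\mf g$-action by an honest Noether current and by one defined only modulo constants.

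The computation then proceeds by two reductions. First, on $\RR^n$ the Poincar\'e lemma together with Lemma~\ref{AKSZ_functionals_lemma} gives $\OO_{\mr{loc}}(\mc E_L) \simeq \mr C^\bullet_{\mr{red}}(L)[n]$, with $\{S,-\}$ becoming $\d_{\mr{CE}}$, and this equivalence is equivariant with $\mf g$ acting \emph{trivially} on the right-hand side — the group $\SO(n)$ and the contraction operators act on $\mc E_L = \Omega^\bullet(\RR^n) \otimes L$ only through the de Rham factor, whose cohomology is $\RR$. Second, $\mf{iso}(n)_{\mr{dR}}$ is contractible, so $\mr C^\bullet(\mf g, V) \simeq V$ for trivial $V$, while $\mf g_0 \simeq \so(n)$ (the ideal $\RR^n_{\mr{dR}}$ being contractible, via Hochschild--Serre), so the relative complex becomes the homotopy fibre of the unit map $V \to \mr C^\bullet(\so(n)) \otimes V$, which is $\mr C^\bullet_{\mr{red}}(\so(n))[-1] \otimes V$. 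Equivalently, the quotient $\mf g/\mf g_0$ is a copy of $\so(n)$ in cohomological degree $-1$, and on cochains the restriction is, up to the contractible factor $\mr C^\bullet(\RR^n_{\mr{dR}})$, the Weil-algebra-to-Chevalley--Eilenberg projection $\mr C^\bullet(\so(n)_{\mr{dR}}) \to \mr C^\bullet(\so(n))$, whose kernel — the ideal generated by the degree-two curvature generators — has cohomology $\mr H^{\bullet - 1}_{\mr{red}}(\so(n))$. Setting $V = \mr C^\bullet_{\mr{red}}(L)[n]$ and reading off degree $1$ of $\mr C^\bullet_{\mr{red}}(\so(n))[-1] \otimes \mr C^\bullet_{\mr{red}}(L)[n]$ yields precisely $\bigoplus_{i+j=n}\mr H^i_{\mr{red}}(\so(n)) \otimes \mr H^j_{\mr{red}}(L)$; running the same argument with $\mr C^\bullet(L)$ gives the inner group, which differs by the summand $\mr H^n_{\mr{red}}(\so(n)) \otimes \mr H^0(L) = \mr H^n_{\mr{red}}(\so(n))$, the ``central'' part of the anomaly.

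The main obstacle, I expect, is making the first reduction legitimate against the \emph{full} differential rather than just its leading term. The Poincar\'e contraction is $\SO(n)$-equivariant — its homotopy is built from the Euler vector field — so it is compatible with the Lie-derivative couplings of $S_{\mr{eq}}$; but it need not commute with the contraction couplings $\{J_{\wt X},-\}$ for rotational $\wt X$, nor with $\hbar\Delta$, so one must run a spectral-sequence argument, filtering by powers of $\hbar$ and by polynomial order in the background fields in the manner of the rescaling-invariance discussion of Section~\ref{En_section}, and verify that these discarded terms — which act trivially on cohomology after the reductions — do not reappear as higher differentials. (Even if the spectral sequence fails to degenerate, the group above is an upper bound for where the obstruction lives, which suffices for the theorem and its consequences.) A secondary point requiring vigilance is tracking the one-step degree shift: the relevant relative Chevalley--Eilenberg cohomology is that of the \emph{kernel} of $\mr C^\bullet(\so(n)_{\mr{dR}}) \to \mr C^\bullet(\so(n))$, not of the basic subcomplex, so the answer is $\mr H^\bullet(\so(n))$ and not the ring $\mr H^\bullet(\mr B\so(n))$ of invariant polynomials — the Pontryagin classes enter only through their transgressions.
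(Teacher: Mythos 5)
Your setup is exactly the paper's: the obstruction is a degree-one class in the homotopy fibre of the restriction map $\mr{Act}_{\mf{iso}(n)_{\mr{dR}}}(\mc E_L) \to \mr{Act}_{\so(n)\ltimes\RR^n_{\mr{dR}}}(\mc E_L)$, with the inner case obtained by replacing $\mr C^\bullet_{\mr{red}}(L)$ by $\mr C^\bullet(L)$, and your final answer agrees. Where you genuinely differ is in how the fibre is computed. The paper keeps the coefficients $\OO_{\mr{loc}}(\mc E_L)\simeq \Omega^\bullet(\RR^n)\otimes\mr C^\bullet_{\mr{red}}(L)[n]$, models the fibre explicitly as the ideal $\sym^{>0}(\so(n)^*[-2])$ inside a Weil-type complex (Lemma~\ref{description_of_fiber_lemma}), filters by powers of the curvature generators, identifies the $E_2$-page via Chern--Weil theory with $\mr H^\bullet(\so(n))\otimes\mr H^\bullet(\mr{BSO}(n))\otimes\mr H^\bullet_{\mr{red}}(L)$, computes the transgression differential $\eta_j\mapsto p_j$, and reads off the answer from the short exact sequence $0\to I\to\overline R\to\overline R/I\to 0$ with $\overline R$ acyclic. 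You instead reduce the coefficients to the trivial module $V=\mr C^\bullet_{\mr{red}}(L)[n]$ first, and then use acyclicity of the Weil algebras of $\mf{iso}(n)$ and $\RR^n$ to identify the fibre with the kernel of the Weil-to-Chevalley--Eilenberg projection tensored with $V$, whose cohomology is $\mr H^{\bullet-1}_{\mr{red}}(\so(n))\otimes\mr H^\bullet(V)$ by the long exact sequence. The homological engine is the same in both arguments (transgression in an acyclic Weil algebra), but your ordering of reductions bypasses the $\mr{BSO}(n)$ bookkeeping and the $d_3$ computation entirely; what the paper's route buys in exchange is the explicit description of surviving classes as elements linear in the Pontryagin generators, which is what the follow-up identification of concrete anomalies relies on.

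Two points in your write-up should be tightened. First, the equivariant coefficient reduction is easier than you fear, and your fallback (``an upper bound suffices'') is not quite enough to prove the statement as phrased, since a subquotient of the displayed group need not embed in it. You do not need the Poincar\'e contraction to commute with the $\iota_{\wt X}$-couplings: it suffices that the inclusion of constants $\mr C^\bullet_{\mr{red}}(L)[n]\hookrightarrow \Omega^\bullet(\RR^n)\otimes\mr C^\bullet_{\mr{red}}(L)[n]$ is a strict map of $\mf{iso}(n)_{\mr{dR}}$-modules --- both $L_X$ and $\iota_{\wt X}$ annihilate $0$-forms --- and a quasi-isomorphism by the Poincar\'e lemma; since both restriction maps are surjective, comparing the two kernel sequences and applying the five lemma identifies the fibres, with no undetermined higher differentials left to worry about (one still needs, as the paper does implicitly, that the identification of $\OO_{\mr{loc}}$ with the de Rham model of Lemma~\ref{AKSZ_functionals_lemma} intertwines the geometric $\mf{iso}(n)_{\mr{dR}}$-actions). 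Second, the $\hbar\Delta$ term does not belong in the local obstruction complex: in this formalism the order-by-order obstruction lives in the classical complex $\mr C^\bullet_{\mr{red}}(\gg,\OO_{\mr{loc}}(\mc E_L))$, which is what the paper --- and, in effect, your computation --- actually uses.
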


\begin{corollary}
If the cohomology group \ref{anomaly_cohomology_eqn} vanishes then the factorization algebra of quantum observables for the topological AKSZ theory $\mc E_L$ can be canonically equipped with the structure of a framed $\bb E_n$-algebra.
\end{corollary}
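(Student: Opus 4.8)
The plan is to deduce the statement by combining three things already in hand: the existence of an $\so(n)\ltimes(\RR^n_{\mr{dR}})$-equivariant quantization, Theorem~\ref{obstruction_theorem}, and the framed realization theorem \cite[Corollary 2.39]{ElliottSafronov}. First I would fix, once and for all, an $\so(n)\ltimes(\RR^n_{\mr{dR}})$-equivariant quantization of $\mc E_L$; one exists by the discussion closing Section~\ref{dR_iso_section}, assembled from Propositions~\ref{isometry_action_prop}, \ref{quantum_iso_action_prop}, and~\ref{translation_dR_no_obstruction_prop}, and it is unique up to a contractible space of choices. By Theorem~\ref{obstruction_theorem}, the obstruction to promoting this to an $\mr{iso}(n)_{\mr{dR}}$-equivariant quantization is a cohomology class in the group~\ref{anomaly_cohomology_eqn}. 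Under the hypothesis that this group vanishes, that obstruction vanishes --- as does every obstruction encountered in the inductive, $\hbar$-order-by-order construction of the lift, each living in the same group --- so an $\mr{iso}(n)_{\mr{dR}}$-equivariant quantization exists.

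Next I would push this equivariance forward to the observables. By the central result of \cite{Book2}, the BV theory $\mc E_L$ on $\RR^n$ determines its factorization algebra $\obsq(\mc E_L)$ of quantum observables, and actions of dg Lie algebras or Lie groups on the theory induce actions on this factorization algebra through the map from the deformation complex of the theory to that of its observables. Feeding in the smooth $\mr{ISO}(n)$-action of Proposition~\ref{quantum_iso_action_prop} together with the homotopical trivialization of the $\so(n)$-action recorded by the $\mr{iso}(n)_{\mr{dR}}$-equivariant quantization just produced, I obtain a smooth action of $\mr{ISO}(n)_{\mr{dR}}$ on $\obsq(\mc E_L)$ in the sense required by \cite{ElliottSafronov}.

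It then remains to invoke the realization theorem. The factorization algebra $\obsq(\mc E_L)$ is rescaling-invariant: this is immediate at the classical level from Lemma~\ref{AKSZ_functionals_lemma}, and at the quantum level it follows by running the spectral sequence of the $\hbar$-adic filtration, whose $E_2$ page is the classical observables and on which the rescaling map is already a quasi-isomorphism. Hence \cite[Corollary 2.39]{ElliottSafronov} applies and endows the observables $\obsq(\mc E_L)(B_1(0))$ on the unit ball, and thereby the factorization algebra of quantum observables, with a framed $\bb E_n$-algebra structure. Canonicity follows because every step above is functorial and the only choice, that of the initial $\so(n)\ltimes(\RR^n_{\mr{dR}})$-equivariant quantization, ranges over a contractible space.

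The essential content of the whole argument is Theorem~\ref{obstruction_theorem}, which I am taking for granted here. Granting it, the one point that genuinely needs care is the second step: one must check that the genuine smooth action of the group $\mr{ISO}(n)$ and the de Rham-type homotopical trivialization of its infinitesimal $\so(n)$-part really do assemble into the precise notion of a smooth $\mr{ISO}(n)_{\mr{dR}}$-action that \cite[Corollary 2.39]{ElliottSafronov} takes as input, and that the functor from BV theories to factorization algebras transports this combined datum faithfully. A secondary point is the inductive structure of the first step --- verifying that all higher ($\hbar^{\ge 2}$) obstructions to the lift again lie in the group~\ref{anomaly_cohomology_eqn}, so that its vanishing really does clear every stage.
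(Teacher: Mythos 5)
Your proposal is correct and follows essentially the same route the paper intends (the corollary's proof is left implicit there): the vanishing hypothesis kills the obstruction of Theorem~\ref{obstruction_theorem}, the resulting $\mr{iso}(n)_{\mr{dR}}$-equivariant quantization induces a smooth $\mr{ISO}(n)_{\mr{dR}}$-action on the rescaling-invariant factorization algebra of quantum observables, and \cite[Corollary 2.39]{ElliottSafronov} then yields the framed $\bb E_n$-structure. Your flagged points of care (the order-by-order obstructions landing in the same group, and transporting the action from the theory to its observables) match the paper's own discussion at the end of Section~\ref{En_section} and in Section~\ref{anomaly_section}.
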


We will outline the argument and then prove the intermediate results that realize it.
The reader should pay attention to how Pontryagin classes can be seen as labeling obstruction classes.

First, we identify the obstruction-deformation complex where the obstruction to our quantization will live.  
Let \[\mr{Act}_{\gg}(\mc E_L)= \mr C^\bullet_{\mr{red}}(\gg, \OO_{\mr{loc}}(\mc E_L))\] denote the formal moduli space describing $\gg$-equivariant deformations of a classical theory $\mc E_L$. 
(For an overview, see Chapter 11 of \cite{Book2}, and for extensive discussion, see Section 2, Chapter 12 and Section 2, Chapter 13 of \cite{Book2}.)
Its tangent complex is a cochain complex, and
the obstruction to $\gg$-equivariant quantization is a degree 1 cocycle in that complex. 
These results lead to equivariant refinements of Lemma~\ref{AKSZ_functionals_lemma},
which characterize the equivariant local functionals up to equivalence:
\[
\mr{Act}_{\gg}(\mc E_L)  \simeq \mr C^\bullet_{\mr{red}}(\gg, C^\bullet_{\mr{red}}(L)[n]).
\]
In this paper, $\gg$ will be $\mf{iso}(n)_{\mr{dR}}$ or~$\so(n) \ltimes \RR^n_{\mr{dR}}$. 

By hypothesis, we have an $\so(n) \ltimes \RR^n_{\mr{dR}}$-equivariant quantization and we are asking to lift to an $\mf{iso}(n)_{\mr{dR}}$-equivariant quantization.
Hence we need to describe the fiber of the map
\[
\mr{Act}_{\mf{iso}(n)_{\mr{dR}}} \to \mr{Act}_{\so(n) \ltimes \RR^n_{\mr{dR}}}
\]
to characterize the lifting problem.
This fiber is derived in nature;
it has an explicit cochain model $\mc C_{n,L}$ that we describe in Lemma~\ref{description_of_fiber_lemma} below.

\begin{remark}
The case of an inner action is quite similar. 
Here we extend $\mr{Act}_{\gg}(\mc E_L)$ to $\mr{InnerAct}_{\gg}(\mc E_L)$ by $\mr C^\bullet_{\mr{red}}$. 
(See Lemma 12.2.3.2 of \cite{Book2}.)
Concretely, we are asking for the $\gg$-action to be inner, i.e., realized by local functionals.
\end{remark}
 
Next, we begin to calculate the cohomology of $\mc C_{n,L}$ by using a  spectral sequence arising from a natural filtration on the complex $\mc C_{n, L}$. 

The first pages of this spectral sequence reduce to computations well-known from topology.
Recall that the cohomology of the classifying spaces $\mr{BSO}(n)$ are graded polynomial rings with even generators given by the Pontryagin classes (and when $n$ is even, an extra generator called the Pfaffian).
Explicitly,
\begin{equation}\label{hbso}
\mr H^\bullet(\mr{BSO}(n), \CC) \iso 
\begin{cases} 
\CC[p_1, p_2, \ldots, p_{k}] & \text{$n =2k+1 $ odd} \\
\CC[p_1, p_2, \ldots, p_{k-1}, p'_k] & \text{$n = 2k$ even} \\
\end{cases}
\end{equation}
where each generator $p_j$ has degree $4j$ and, for $n=2k$ even, the generator $p'_k$  has degree $2k=n$.
Recall as well that the Lie algebra cohomology $\mr H^\bullet(\so(n))$ equals the cohomology $\mr H^\bullet(\SO(n),\CC)$;
these are graded polynomial rings with odd generators, where each generator's degree is one less than the corresponding Pontryagin class.
Explicitly,
\begin{equation}\label{hso}
\mr H^\bullet(\so(n)) \iso 
\begin{cases} 
\CC[\eta_1, \eta_2, \ldots, \eta_{k}] & \text{$n =2k+1 $ odd} \\
\CC[\eta_1, \eta_1, \ldots, \eta_{k-1}, \eta_{k}'] & \text{$n = 2k$ even} \\
\end{cases}
\end{equation}
where $|\eta_k| = 4j-1$ and, for $n=2k$ even, $|\eta'_k| = |p'_k|-1$.
Thus, there are generators in degrees 3, 7, and so on.

In terms of those cohomology rings, the $E_2$-page of the spectral sequence computing $\mr H^\bullet(\mc C_{n,L})$ is 
\[
E_2^i \iso\bigoplus_{\substack{j+k+\ell = n+i \\ k>0}} \mr H^j(\so(n)) \otimes \mr H^{k}(\mr{BSO}(n)) \otimes \mr H^\ell_{\mr{red}}(L).
\]
This isomorphism is the content of Proposition~\ref{E2_page_prop}.  
The differential on this $E_2$-page sends each $\eta_j$ to $p_j$.  
As shown in Lemma~\ref{d3_differential_lemma}, the $E_3$-page is then isomorphic to
\[
\mr H^i(\mc C_{n,L}) 
= \bigoplus_{\substack{j+\ell=n+i-1\\ i>0}} \mr H^j(\so(n)) \otimes \mr H^\ell_{\mr{red}}(L)
\]
and the spectral sequence collapses on this page.
Thus, we know that anomalies obstructing the $\so(n)_{\mr{dR}}$ action live in
\[
\bigoplus_{\substack{j+\ell=n\\j>0}} \mr H^j(\so(n)) \otimes \mr H^\ell_{(\mr{red})}(L),
\]
as claimed.

\begin{remark}
We have identified the space of possible anomalies abstractly in terms of classes in $\mr H^{>0}(\so(n))$, but we can make our description more explicit.  That is, we can describe where these classes came from in $\mr H^{>0}(\mr{BSO}(n)) \otimes \mr H^\bullet(\so(n))$.  The classes that survive to the $E_\infty$ page of the spectral sequence are all linear in the Pontryagin classes $p_k$.  If we identify $\mr H^\bullet(\mr{BSO}(n)) \otimes \mr H^\bullet(\so(n))$ as the polynomial algebra in the classes $p_j$ and $\eta_j$, we can identify the factor in our spectral sequence surviving to the $E_\infty$ page as the image of the generators of $\RR[\eta_1, \ldots, \eta_k]$ under the differential induced by the map sending $\eta_i$ to $p_i$ for all $i$.  In other words, the classes that survive take the form
\[\sum_{j=1}^\ell \eta_{i_1} \eta_{i_2} \cdots \eta_{i_{j-1}} p_{i_j} \eta_{i_{j+1}} \cdots \eta_{i_\ell},\]
for any sequence $1 \le i_1 < i_2 < \cdots < i_\ell \le k$.

This description will follow directly from our proof of Theorem~\ref{obstruction_theorem}.  We discuss the simplest two examples, where the dimension $n$ is equal to 3 or 4, in Examples \ref{3d_example} and \ref{4d_example}. 
\end{remark}

Now that we have traced the path,
we will begin with the first step.

\begin{lemma} \label{description_of_fiber_lemma}
The fiber of the map 
\begin{equation}\label{htpyfib}
\mr{Act}_{\mf{iso}(n)_{\mr{dR}}} \to \mr{Act}_{\so(n) \ltimes \RR^n_{\mr{dR}}}
\end{equation}
is quasi-isomorphic to $\mc C_{n,L}$, whose underlying graded vector space agrees with that of
\[
\mr C^\bullet(\so(n), \sym^{>0}(\so(n)^*[-2]) \otimes \Omega^\bullet(\RR^n)) \otimes \mr C^\bullet_\mr{red}(L)[n],
\]
but whose differential is 
\begin{equation} \label{fib_dif}
(\d_{\mr{CE}}+ \d_{\mr{dR}} + \d') \otimes 1 + 1 \otimes \d_{\mr{CE}},
\end{equation}
where $\d_{\mr{CE}}$ is the Chevalley--Eilenberg differential (for the relevant Lie algebra acting on the relevant module), $\d_{\mr{dR}}$ is the de Rham differential on $\Omega^\bullet(\RR^n)$, and $\d'$ is the operator extended as a derivation from the identity map $\so(n)^*[-1] \to \so(n)^*[-2]$.
\end{lemma}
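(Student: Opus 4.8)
The plan is to compute the homotopy fiber of \eqref{htpyfib} at the level of cochain complexes and then simplify. I would fix the model $\mr{Act}_\gg(\mc E_L) = \mr C^\bullet_{\mr{red}}(\gg, \OO_{\mr{loc}}(\mc E_L))$ together with the equivariant refinement of Lemma~\ref{AKSZ_functionals_lemma}: on $\RR^n$ the local functionals $\OO_{\mr{loc}}(\mc E_L)$ are equivariantly quasi-isomorphic to $(\Omega^\bullet(\RR^n) \otimes \mr C^\bullet_{\mr{red}}(L)[n],\, \d_{\mr{dR}} \otimes 1 + 1 \otimes \d_{\mr{CE}})$, where $\mf{iso}(n)_{\mr{dR}}$ acts only through the de Rham factor --- the summand $\mf{iso}(n)$ by the Lie derivative along Killing fields and the shifted summand $\mf{iso}(n)[1]$ by contraction with them. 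The map in \eqref{htpyfib} is induced by the inclusion of dg Lie algebras $\so(n) \ltimes \RR^n_{\mr{dR}} \hookrightarrow \mf{iso}(n)_{\mr{dR}} = \so(n)_{\mr{dR}} \ltimes \RR^n_{\mr{dR}}$, which simply omits the odd summand $\so(n)[1]$.

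Restriction of Chevalley--Eilenberg cochains along this inclusion is a surjection of cochain complexes, so the homotopy fiber is quasi-isomorphic to its strict kernel. Writing $W(\so(n))$ for the Weil algebra --- degree-$1$ generators $\theta$ dual to $\so(n)$, degree-$2$ ``curvature'' generators $\psi$ dual to $\so(n)[1]$, Weil differential $\d_{\mr{CE}} + \d'$ with $\d'\theta = \psi$ --- one has $\mr C^\bullet(\mf{iso}(n)_{\mr{dR}}, M) = \sym(\so(n)^*[-2]) \otimes \mr C^\bullet(\so(n) \ltimes \RR^n_{\mr{dR}}, M)$ as a graded vector space, with restriction the projection onto $\sym^0$. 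The only term of the $\mf{iso}(n)_{\mr{dR}}$-Chevalley--Eilenberg differential that changes the $\sym(\so(n)^*[-2])$-weight is $\d'$, which raises it; hence the kernel is the subcomplex $\sym^{>0}(\so(n)^*[-2]) \otimes \mr C^\bullet(\so(n) \ltimes \RR^n_{\mr{dR}}, \OO_{\mr{loc}}(\mc E_L))$ with the restricted differential, which --- read off from the Weil formulae for $W(\mf{iso}(n))$ --- comprises the $\so(n)$-brackets and coadjoint actions, the derivation $\d'$, the full $\RR^n_{\mr{dR}}$-part (its own Weil term together with the translation Lie derivatives and contractions), $\d_{\mr{dR}}$, the rotation contractions $\psi^a \iota_{X_a}$ coming from the $\so(n)[1]$-action on the de Rham factor, and $\d_{\mr{CE}}$ for $L$.

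It remains to bring this to the stated form, in two moves. First, since $\RR^n_{\mr{dR}}$ acts only on the de Rham factor, $\mr C^\bullet(\so(n) \ltimes \RR^n_{\mr{dR}}, \OO_{\mr{loc}}(\mc E_L)) = \mr C^\bullet\!\big(\so(n),\, \mr C^\bullet(\RR^n_{\mr{dR}}, \Omega^\bullet(\RR^n))\big) \otimes \mr C^\bullet_{\mr{red}}(L)[n]$, and the quotient map $\mr C^\bullet(\RR^n_{\mr{dR}}, \Omega^\bullet(\RR^n)) \to (\Omega^\bullet(\RR^n), \d_{\mr{dR}})$ killing all the $\RR^n_{\mr{dR}}$-generators is a chain map and an $\so(n)$-equivariant quasi-isomorphism: both sides have cohomology $\RR$ concentrated in degree $0$, the cochain-level avatar of the homotopical triviality of the translation action (via the rotation-invariant radial contraction of $\RR^n$, the mechanism behind Proposition~\ref{translation_dR_no_obstruction_prop}). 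As this map commutes with $\d'$, applying $\mr C^\bullet\!\big(\so(n),\, \sym^{>0}(\so(n)^*[-2]) \otimes (-)\big) \otimes \mr C^\bullet_{\mr{red}}(L)[n]$ --- which preserves quasi-isomorphisms --- eliminates the translation variables. Second, the residual rotation contractions $\psi^a\iota_{X_a}$ are absorbed by conjugating the differential by $\exp(\pm\sum_a \theta^a \iota_{X_a})$, the Kalkman change of variables relating the Weil- and Cartan-type models; after this, reorganizing the tensor factors, the complex is $\mc C_{n,L}$ with differential exactly \eqref{fib_dif}, $\d_{\mr{CE}}$ being the Chevalley--Eilenberg differential of $\so(n)$ on $\sym^{>0}(\so(n)^*[-2]) \otimes \Omega^\bullet(\RR^n)$.

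The main obstacle is this last pair of simplifications. One must verify that the Poincar\'e-type quasi-isomorphism $\mr C^\bullet(\RR^n_{\mr{dR}}, \Omega^\bullet(\RR^n)) \to \Omega^\bullet(\RR^n)$ is genuinely $\so(n)$-equivariant, is compatible with $\d'$, and survives the (completed) tensor products and the outer Chevalley--Eilenberg functor; and that the conjugation by $\exp(\pm\sum_a\theta^a\iota_{X_a})$ cancels the $\psi^a\iota_{X_a}$ term precisely, without leaving terms outside $\d_{\mr{CE}} + \d' + \d_{\mr{dR}}$ --- together with the routine but delicate sign bookkeeping throughout. Finally, one notes that the ``reduced'' decoration on the outer Chevalley--Eilenberg complex is immaterial for the fiber, since the discarded $\sym^0$-summand restricts isomorphically and contributes nothing.
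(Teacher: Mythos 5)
Your core strategy is the same as the paper's: the restriction map \eqref{htpyfib} is a surjection (a fibration in the projective model structure), so the homotopy fiber is computed by the strict kernel, which one then identifies with $\mc C_{n,L}$. The paper's own proof stops essentially at that observation, citing the direct description of the $\mr{Act}$ complexes from \cite[Section 11.2]{Book2}; you go further and actually carry out the identification, which is worthwhile detail: the elimination of the $\RR^n_{\mr{dR}}$-cochain directions via the $\so(n)$-equivariant quasi-isomorphism $\mr C^\bullet(\RR^n_{\mr{dR}}, \Omega^\bullet(\RR^n)) \to \Omega^\bullet(\RR^n)$ (both acyclic except for $\RR$ in degree $0$, compatibly with the cross terms of the semidirect product) is exactly the kind of reduction the paper leaves implicit, and your checks that the projection is a chain map and survives the outer $\mr C^\bullet(\so(n),-)$ and the $\sym^{>0}$ factor are the right things to verify.

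One step is misstated, though it does not sink the argument. The Kalkman/Mathai--Quillen conjugation by $\exp(\pm\sum_a \theta^a \iota_{X_a})$ does not remove only the curvature-contraction term $\psi^a\iota_{X_a}$: it intertwines the full Chevalley--Eilenberg (``BRST'') differential, action terms $\theta^a L_{X_a} + \psi^a \iota_{X_a}$ included, with the Weil-type differential $\d_{\mr{CE}} + \d' + \d_{\mr{dR}}$ in which $\so(n)$ acts \emph{trivially} on the de Rham factor. So after your conjugation you do not land literally on \eqref{fib_dif} as you claim (where the first $\d_{\mr{CE}}$ retains the Lie-derivative action on $\Omega^\bullet(\RR^n)$); you land on a different model in which the forms carry no $\so(n)$-action. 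Since the lemma only asserts a quasi-isomorphism, this is easily repaired: either note that the two models are quasi-isomorphic via the $\SO(n)$-equivariant contraction of $\Omega^\bullet(\RR^n)$ onto constants (on which both the Lie derivative and the contraction act by zero), or skip the conjugation entirely and observe that the extra term $\psi^a\iota_{X_a}$ raises the $\sym$-filtration of Proposition~\ref{E2_page_prop} and dies on the $E_2$-page, so it is invisible to everything done with $\mc C_{n,L}$ afterwards. With that correction your argument is sound, and it supplies a more explicit justification of the lemma than the paper records.
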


Observe that this model of the fiber is, in fact, a dg commutative algebra: the tensor factors are dg commutative algebras and the differential can be checked to be a derivation.

\begin{proof}
We can describe the complexes $\mr{Act}_{\mf{iso}(n)_{\mr{dR}}}$ and  $\mr{Act}_{\so(n) \ltimes \RR^n_{\mr{dR}}}$ directly as in \cite[Section 11.2]{Book2}.  This complex $\mc C_{n,L}$ is the set-theoretic fiber product (i.e., kernel of the map of cochain complexes).
But, using the projective model structure \cite{HinichHAHA} on cochain complexes (or dg Lie algebras), we see the map \eqref{htpyfib} is a fibration and so the kernel provides the homotopy fiber product.
\end{proof}

As discussed in the outline, we now consider the spectral sequence associated to the filtration that turns on the term $\d'$ in the differential.  

\begin{prop}\label{E2_page_prop}
Consider the spectral sequence associated to the filtration $F_p$ on the complex $\mc C_{n,L}$ with
\[
F_p\, \mc C_{n,L} =  
\bigoplus_{a \ge p}\mr C^\bullet(\so(n), {\sym}^{a}(\so(n)^*[-2]) \otimes \Omega^b(\RR^n)) \otimes \mr C^\bullet_\mr{red}(L)[n]
\]
for $p \ge 1$, where the right hand side is equipped with the differential~\eqref{fib_dif}, and $F_{\le 0} \, \mc C_{n,L} = 0$.

The $E_2$-page of this spectral sequence is equivalent to 
\begin{align*}
E_2^i &\iso \bigoplus_{\substack{j+k+\ell = n+i \\ k>0}} \mr H^j(\so(n)) \otimes \mr H^{k}(\mr{BSO}(n)) \otimes \mr H^\ell_{\mr{red}}(L).
\end{align*}
\end{prop}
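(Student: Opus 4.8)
The plan is to run the spectral sequence of the filtration $F_\bullet$ and identify its first two pages. On the associated graded $\mathrm{gr}\,\mc C_{n,L}$ the differential is obtained from \eqref{fib_dif} by deleting $\d'$: among the four summands, only $\d'$ changes the symmetric power $a$ in ${\sym}^a(\so(n)^*[-2])$ (it raises it by one), while $\d_{\mr{CE}}$, $\d_{\mr{dR}}$, and the Chevalley--Eilenberg differential of $L$ all preserve $a$. So the $E_0$-differential is $d_0 = (\d_{\mr{CE}} + \d_{\mr{dR}}) \otimes 1 + 1 \otimes \d_{\mr{CE}}$ and the $E_1$-differential is the derivation induced by $\d'$. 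The filtration is bounded in each cohomological degree: if an element has total degree $i$, then $i = (\text{CE word length}) + 2a + (\text{form degree}) + (\text{degree in } \mr C^\bullet_{\mr{red}}(L)) - n$, which bounds $a$ both above and below, so the spectral sequence converges and $E_1$, $E_2$ may be computed directly.

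To compute $E_1 = \mr H^\bullet(\mathrm{gr}\,\mc C_{n,L}, d_0)$ I would make three moves. First, working over $\CC$ and noting that the factor $\mr C^\bullet_{\mr{red}}(L)[n]$ carries only its own differential, the K\"unneth theorem gives
\[
E_1 \iso \mr H^\bullet\big(\mr C^\bullet(\so(n), {\sym}^{>0}(\so(n)^*[-2]) \otimes \Omega^\bullet(\RR^n)), \d_{\mr{CE}} + \d_{\mr{dR}}\big) \otimes \mr H^\bullet_{\mr{red}}(L)[n].
\]
Second, I would invoke the Poincar\'e lemma in the form of the radial contraction on $\RR^n$, which is $\SO(n)$-equivariant; the corresponding contraction data $(\iota,\pi,h)$ descends to a contraction on the Chevalley--Eilenberg complex (one checks that $\iota$ commutes with $\d_{\mr{CE}}$, that $\iota$ lands in closed forms, and that $1 \otimes h$ is a chain homotopy for $\d_{\mr{CE}}+\d_{\mr{dR}}$, using $[\d_{\mr{dR}},h] = \mr{id} - \iota\pi$ and equivariance of $h$). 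Hence $\Omega^\bullet(\RR^n)$ may be replaced by $\CC$, and the first factor becomes $\mr H^\bullet_{\mr{Lie}}(\so(n), {\sym}^{>0}(\so(n)^*[-2]))$. Third, decomposing ${\sym}^{>0}(\so(n)^*)$ into its finite-dimensional $\so(n)$-submodules ${\sym}^a(\so(n)^*)$ and using the classical fact that for the reductive Lie algebra $\so(n)$ one has $\mr H^\bullet_{\mr{Lie}}(\so(n), V) \iso V^{\so(n)} \otimes \mr H^\bullet(\so(n))$ for $V$ a sum of finite-dimensional modules (cohomology with coefficients in a nontrivial finite-dimensional irreducible vanishes, by the Casimir argument), together with Chern--Weil theory identifying invariant polynomials of degree $a$ with classes in $\mr H^{2a}(\mr{BSO}(n))$ --- and the internal degree $2a$ carried by ${\sym}^a(\so(n)^*[-2])$ matches this --- I obtain
\[
\mr H^\bullet_{\mr{Lie}}(\so(n), {\sym}^{>0}(\so(n)^*[-2])) \iso {\sym}^{>0}(\so(n)^*[-2])^{\so(n)} \otimes \mr H^\bullet(\so(n)) \iso \mr H^{>0}(\mr{BSO}(n)) \otimes \mr H^\bullet(\so(n)).
\]
Assembling, $E_1 \iso \mr H^\bullet(\so(n)) \otimes \mr H^{>0}(\mr{BSO}(n)) \otimes \mr H^\bullet_{\mr{red}}(L)[n]$, and tallying degrees (CE word length $\to j$, symmetric degree $\to k = 2a > 0$ or $k=n$ from the Pfaffian, form degree $\to 0$, and the shift $[n]$ turning degree $\ell$ in $\mr C^\bullet_{\mr{red}}(L)$ into $\ell - n$) puts a class in total degree $i = j + k + \ell - n$, i.e. $j+k+\ell = n+i$ with $k>0$.

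It then remains to see $E_2 = E_1$, i.e. $d_1 = 0$. Each page of the spectral sequence is a differential graded algebra and $d_1$, induced by the derivation $\d'$, is a derivation, so it suffices to check it on algebra generators of $E_1$. It kills the Pontryagin generators (and, for $n$ even, the Pfaffian generator) of $\mr H^{>0}(\mr{BSO}(n))$ because $\d'$ is identically zero on ${\sym}(\so(n)^*[-2])$; it kills $\mr H^\bullet_{\mr{red}}(L)$ because $\d'$ acts trivially on that factor; and it kills each $\eta_j \in \mr H^\bullet(\so(n))$ because $d_1\eta_j$ would lie in the summand of $E_1$ of symmetric degree $1$, namely ${\sym}^1(\so(n)^*)^{\so(n)} \otimes \mr H^\bullet(\so(n)) \otimes \mr H^\bullet_{\mr{red}}(L) = 0$ (as $(\so(n)^*)^{\so(n)} = 0$). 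Hence $d_1 = 0$, $E_2 = E_1$, and this is exactly the asserted formula.

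The step most in need of care is the $\SO(n)$-equivariant Poincar\'e lemma together with its interaction with the (completed) tensor products and the Chevalley--Eilenberg functor: one must produce a genuinely $\SO(n)$-equivariant contracting homotopy on $\Omega^\bullet(\RR^n)$ and verify that $1 \otimes h$ induces a chain homotopy on the full complex rather than merely a pointwise quasi-isomorphism. A secondary technical point is justifying convergence and the order of the auxiliary reductions given that ${\sym}^{>0}(\so(n)^*[-2])$ is infinite-dimensional; this is handled by the degreewise finiteness of the filtration noted above, together with organizing ${\sym}^{>0}(\so(n)^*)$ as a direct sum of finite-dimensional $\so(n)$-modules before applying the cohomology vanishing.
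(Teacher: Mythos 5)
Your proposal is correct and follows essentially the same route as the paper: filter by symmetric degree, eliminate $\Omega^\bullet(\RR^n)$ by the Poincar\'e lemma, use semisimplicity of $\so(n)$ to split the Lie algebra cohomology as $\mr H^\bullet(\so(n))$ tensored with strict invariants, and identify $(\sym^{>0}(\so(n)^*[-2]))^{\so(n)}$ with $\mr H^{>0}(\mr{BSO}(n))$ via Chern--Weil. The only (harmless) differences are cosmetic: you use an explicit $\SO(n)$-equivariant radial contraction where the paper runs the double-complex spectral sequence of Lemma \ref{invariant_coeff_prop}, and you additionally verify $d_1=0$ to reconcile the page indexing, a point the paper leaves implicit.
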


This filtration produces a spectral sequence of graded commutative algebras.
This isomorphism on the $E_2$-page is, in fact, a map of graded commutative algebras.

To compute the $E_2$-page of this spectral sequence, the following result is useful.

\begin{lemma}\label{invariant_coeff_prop}
If $V$ is a finite-dimensional $\so(n)$-representation, then there is a natural isomorphism
\[\mr H^\bullet(\so(n), V \otimes \Omega^\bullet(\RR^n)) \iso \mr H^\bullet(\so(n)) \otimes (V \otimes \Omega^\bullet(\RR^n))^{\so(n)}.\]
\end{lemma}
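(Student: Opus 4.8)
The plan is to compute $\mr H^\bullet(\so(n), V \otimes \Omega^\bullet(\RR^n))$ by first noting that, since we are working over $\CC$ and $\so(n)$ is reductive, Lie algebra cohomology with coefficients in a module can be computed via the Hochschild--Serre / invariant-cochain philosophy. Concretely, I would first observe that $\Omega^\bullet(\RR^n)$ is, by the Poincar\'e lemma, quasi-isomorphic as an $\so(n)$-representation to the trivial representation $\CC$ concentrated in degree $0$ (the inclusion of constants is an $\so(n)$-equivariant quasi-isomorphism). Hence $V \otimes \Omega^\bullet(\RR^n) \simeq V$ as $\so(n)$-modules, and one reduces to showing $\mr H^\bullet(\so(n), V) \iso \mr H^\bullet(\so(n)) \otimes V^{\so(n)}$. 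This last statement is the standard fact that, for a reductive Lie algebra $\gg$ over a field of characteristic zero and a finite-dimensional representation $V$, the cohomology $\mr H^\bullet(\gg, V) \iso \mr H^\bullet(\gg) \otimes V^{\gg}$; one proves it by decomposing $V$ into isotypic components, observing that $\mr H^\bullet(\gg, W) = 0$ for $W$ a nontrivial irreducible (Whitehead's lemma, or the vanishing of the Casimir action on cochains valued in a nontrivial irreducible), and that only the trivial-isotypic part $V^{\gg} \otimes (\text{trivial})$ contributes, where it tensors the coefficient-free cohomology $\mr H^\bullet(\gg)$.

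The only subtlety is that the displayed target of the lemma is $\mr H^\bullet(\so(n)) \otimes (V \otimes \Omega^\bullet(\RR^n))^{\so(n)}$ rather than $\mr H^\bullet(\so(n)) \otimes V^{\so(n)}$, so I must be careful about what ``$(V\otimes\Omega^\bullet(\RR^n))^{\so(n)}$'' means here: it should be read as the \emph{derived} invariants, i.e.\ the complex $\mr C^\bullet(\so(n), V\otimes\Omega^\bullet(\RR^n))$ with its $\so(n)$-action divided out only after passing to cohomology in the appropriate sense, or more simply as the honest cochain complex $(V \otimes \Omega^\bullet(\RR^n))^{\so(n)}$ with differential induced by $\d_{\mr{dR}}$. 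With that reading, since taking $\so(n)$-invariants of a finite-dimensional representation is exact (reductivity, characteristic zero), $(V \otimes \Omega^\bullet(\RR^n))^{\so(n)}$ has cohomology equal to $(V \otimes \mr H^\bullet_{\mr{dR}}(\RR^n))^{\so(n)} = V^{\so(n)}$, so the two formulations agree on cohomology. I would phrase the statement and its proof at the level of a natural quasi-isomorphism of complexes, which also gives the naturality in $V$ that the lemma asserts.

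The key steps, in order, are: (1) exhibit the $\so(n)$-equivariant quasi-isomorphism $\CC \hookrightarrow \Omega^\bullet(\RR^n)$ from the Poincar\'e lemma; (2) use reductivity of $\so(n)$ and finite-dimensionality to split the $\mr C^\bullet(\so(n),-)$-functor as a direct sum over isotypic components of the coefficient module, commuting cohomology with the direct sum; (3) invoke vanishing of $\mr H^\bullet(\so(n), W)$ for $W$ a nontrivial irreducible, isolating the trivial-isotypic summand; (4) identify that summand's contribution as $\mr H^\bullet(\so(n)) \otimes (\text{multiplicity space})$, and identify the multiplicity space with $(V \otimes \Omega^\bullet(\RR^n))^{\so(n)}$ (or, on cohomology, with $V^{\so(n)}$). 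The main obstacle is Step (2)--(3): making precise that one may decompose the coefficients and discard nontrivial isotypic pieces \emph{compatibly with the de Rham differential and with naturality}, rather than just after passing to a single cohomology group. This is handled cleanly by working with the Casimir element of $\so(n)$: it acts by an invertible scalar on each nontrivial isotypic block of any finite-dimensional representation and is chain-homotopic to zero on $\mr C^\bullet(\so(n),-)$, which forces the acyclicity of those blocks functorially. I expect this to be the only place requiring genuine care; the rest is bookkeeping.
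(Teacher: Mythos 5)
Your argument is essentially the paper's: both reduce along the Poincar\'e lemma (you via the $\so(n)$-equivariant quasi-isomorphism $\CC \hookrightarrow \Omega^\bullet(\RR^n)$, the paper via the spectral sequence of the double complex taking the exterior derivative first) and then invoke the standard fact $\mr H^\bullet(\gg, W) \iso \mr H^\bullet(\gg) \otimes W^{\gg}$ for a finite-dimensional representation of a semisimple Lie algebra, which the paper simply cites and you prove via the Casimir/isotypic decomposition. Your extra care about reading the right-hand side at the cochain versus cohomology level is a reasonable clarification of the same statement, not a different route.
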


In words, computing the cohomology decouples into knowing $H^\bullet(\so(n))$ and knowing the strict invariants of $V$-valued differential forms.

\begin{proof}[Proof of Lemma~\ref{invariant_coeff_prop}]
The complex $\mr C^\bullet(\so(n), V \otimes \Omega^\bullet(\RR^n))$ is the totalization of a double complex where one differential is the exterior derivative and the other is the Chevalley-Eilenberg differential.
Consider the spectral sequence of this double complex, where we take the exterior derivative first.
Then the $E_2$-page is $\mr H^\bullet(\so(n), V)$.
For any finite-dimensional representation $W$ of a semisimple Lie algebra $\gg$, there is a natural isomorphism
\[
H^\bullet(\gg, W) \iso H^\bullet(\gg) \otimes W^\gg
\]
so the $E_2$-page is isomorphic to $\mr H^\bullet(\so(n)) \otimes V^{\so(n)}$.
The sequence collapses on this page, so the claim is shown.
\end{proof}

That lemma makes the proof of the proposition straightforward.

\begin{proof}[Proof of Proposition~\ref{E2_page_prop}]
By examining the filtration, one finds that computing the $E_2$-page boils down to computing the cohomology of the double complex
\[
\mr C^\bullet(\so(n), {\sym}^{a}(\so(n)^*[-2]) \otimes \Omega^\bullet(\RR^n))
\]
for each natural number~$a$.
But then
\[
\left( {\sym}(\so(n)^*[-2]) \right)^{\so(n)} \iso \mr H^\bullet(\mr{BSO}(n), \CC),
\]
by Chern--Weil theory, as in~\cite{Chern}.
\end{proof}

\begin{lemma}\label{d3_differential_lemma}
The differential $\d_3$ on the $E_2$-page
\[
E_2^i \iso
\bigoplus_{\substack{j+k+\ell = n+i\\k>0}} \mr H^j(\so(n)) \otimes \mr H^{k}(\mr{BSO}(n)) \otimes \mr H^\ell_{\mr{red}}(L)
\]
of our spectral sequence is induced, as a module over $\mr H^\bullet(\mr{BSO}(n))$, by the map sending $\eta_j \otimes \alpha$ to $1 \otimes (p_j \wedge \alpha)$, where $1 \otimes (p_j \wedge \alpha) \in \mr H^\bullet(\so(n)) \otimes \mr H^\bullet(\mr{BSO}(n))$.
\end{lemma}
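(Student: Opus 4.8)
The plan is to determine $\d_3$ from two structural facts — that the spectral sequence is multiplicative and linear over $\mr H^\bullet(\mr{BSO}(n))$ — which reduce the problem to computing $\d_3$ on the generators $\eta_j$ of $\mr H^\bullet(\so(n))$, and then to identify that computation with the classical Chern--Weil transgression by recognising the relevant part of $\mc C_{n,L}$ as the Weil algebra of $\so(n)$.

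First I would make the reduction to generators. The $\sym$-degree filtration on $\mc C_{n,L}$ is multiplicative (as remarked after Proposition~\ref{E2_page_prop}), so the spectral sequence is one of graded-commutative algebras and $\d_3$ is a derivation. Moreover, for an $\so(n)$-invariant polynomial $P \in (\sym^{a}(\so(n)^*[-2]))^{\so(n)}$ the element $1 \otimes P \otimes 1 \in \mc C_{n,L}$ is a degree-zero cocycle of filtration degree $a$ — since $\d'$ ignores the $\so(n)^*[-2]$-factors, $\d_{\mr{CE}}$ annihilates invariants, and $\d_{\mr{dR}}$ and $1\otimes\d_{\mr{CE}}^L$ do not touch this factor — so multiplication by these elements makes the whole spectral sequence a module over $\mr H^\bullet(\mr{BSO}(n)) \iso (\sym(\so(n)^*[-2]))^{\so(n)}$, and $\d_3$ is $\mr H^\bullet(\mr{BSO}(n))$-linear. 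The $E_2$-page of Proposition~\ref{E2_page_prop} is generated as an algebra by $\mr H^\bullet(\so(n))$, $\mr H^{>0}(\mr{BSO}(n))$ and $\mr H^\bullet_{\mr{red}}(L)$, and $\d_3$ annihilates the last two families, because their generators are represented by honest cocycles of $\mc C_{n,L}$ (an invariant polynomial, resp.\ a product of an invariant polynomial with a $\mr C^\bullet_{\mr{red}}(L)$-cocycle). Hence $\d_3$, as an $\mr H^\bullet(\mr{BSO}(n))$-linear derivation, is determined by the classes $\d_3(\eta_j)$.

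Next I would compute $\d_3(\eta_j)$. The point is that $\d'$ acts only on the subalgebra $\mr C^\bullet(\so(n), \sym^{\ge 0}(\so(n)^*[-2]))$, so the differentials beyond the $E_2$-page come from that factor alone. The factor $\mr C^\bullet_{\mr{red}}(L)[n]$ carries the trivial $\so(n)$-action and lies in filtration degree $0$, so by the Künneth theorem for spectral sequences of tensor products (over the field $\CC$) it merely tensors everything with $\mr H^\bullet_{\mr{red}}(L)$; and by (the proof of) Lemma~\ref{invariant_coeff_prop} the de Rham factor $\Omega^\bullet(\RR^n)$ contributes only its cohomology $\CC$. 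What remains is the $\sym$-degree filtration on $(\mr C^\bullet(\so(n), \sym^{\ge 0}(\so(n)^*[-2])), \d_{\mr{CE}} + \d')$, which is exactly the Weil algebra $W(\so(n)) = \Lambda(\so(n)^*[-1]) \otimes \sym(\so(n)^*[-2])$ — here $\d'$ is the standard ``$\theta \mapsto \Omega$'' part of the Weil differential — together with its subcomplex cut out by $\sym^{>0}$, which is a subcomplex since the differential preserves or raises $\sym$-degree. Now $W(\so(n))$ is acyclic, its filtration spectral sequence has $E_2$-page $\mr H^\bullet(\so(n)) \otimes \mr H^\bullet(\mr{BSO}(n)) \iso \Lambda[\eta_1, \ldots] \otimes \CC[p_1, \ldots]$, and it abuts to $\CC$; together with multiplicativity and $\CC[p_\bullet]$-linearity this forces the differential to be the Koszul differential $\eta_j \mapsto p_j$, which is precisely the transgression of the primitive generator $\eta_j \in \mr H^{4j-1}(\so(n))$ to the Pontryagin class $p_j$ — as witnessed by the Chern--Simons transgression forms $\mr{cs}_j$, with $\d\,\mr{cs}_j = p_j$ and $\mr{cs}_j$ congruent to a representative of $\eta_j$ modulo the ideal generated by $\so(n)^*[-2]$. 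Restricting to the $\sym^{>0}$ subcomplex and reinstating the inert factors $\CC$ and $\mr H^\bullet_{\mr{red}}(L)$, I get that $\d_3$ is the $\mr H^\bullet(\mr{BSO}(n))$-linear derivation with $\eta_j \otimes \alpha \mapsto 1 \otimes (p_j \wedge \alpha)$, as claimed; the Leibniz rule then yields $\eta_{i_1}\cdots\eta_{i_\ell}\otimes\alpha \mapsto \sum_m \pm\, \eta_{i_1}\cdots\widehat{\eta_{i_m}}\cdots\eta_{i_\ell}\otimes(p_{i_m}\wedge\alpha)$. Collapse afterwards follows by comparing the cohomology of the $\sym^{>0}$ subcomplex — which the long exact sequence of $0 \to W(\so(n))_{\ge 1} \to W(\so(n)) \to \mr C^\bullet(\so(n)) \to 0$ identifies with $\mr H^{>0}(\so(n))$ up to a shift — with the homology of the Koszul complex $\Lambda[\eta]\otimes\CC[p]_{>0}$.

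The step I expect to be the main obstacle is the transgression identification: one must rule out surviving ``lower-order'' contributions to $\d_3(\eta_j)$ — monomials that are decomposable in the $\eta$'s and $p$'s, and spurious $\mr H^\bullet_{\mr{red}}(L)$-terms. The cleanest way is the abstract argument above: multiplicativity, $\mr H^\bullet(\mr{BSO}(n))$-linearity, the Künneth splitting of the $L$-factor, and the known abutment $\CC$ of the Weil-algebra spectral sequence pin the differential down; alternatively one may simply quote the classical Chern--Weil computation of the transgression in the universal $\SO(n)$-bundle.
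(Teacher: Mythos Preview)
Your argument is correct but considerably more developed than the paper's. The paper's proof is two sentences: since $\d'$ is the only summand of the differential~\eqref{fib_dif} that raises the $\sym$-filtration, the differential on the page computed in Proposition~\ref{E2_page_prop} is induced by $\d'$, and the paper then simply asserts that $\d'$ ``acts on generators exactly as stated.'' You instead situate the computation inside the Weil algebra $W(\so(n))$, reduce to the generators $\eta_j$ via multiplicativity and $\mr H^\bullet(\mr{BSO}(n))$-linearity (after splitting off the inert $\mr H^\bullet_{\mr{red}}(L)$-factor by K\"unneth), and identify the map $\eta_j \mapsto p_j$ with the classical Chern--Weil transgression, witnessed explicitly by the Chern--Simons forms~$\mr{cs}_j$.

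What your route buys is an honest account of \emph{why} $\eta_j$ lands on $p_j$: applying $\d'$ once to a cocycle representative of $\eta_j \in \Lambda^{4j-1}(\so(n)^*)$ lands in $\sym$-degree~$1$, not in $\sym$-degree~$2j$ where $p_j$ lives, so the paper's one-line appeal to $\d'$ is really hiding the zigzag that your $\mr{cs}_j$ makes visible; your third paragraph correctly anticipates exactly this issue and resolves it by invoking acyclicity of $W(\so(n))$ together with the explicit transgression. Your reduction also makes transparent that the $\mr H^\bullet(\mr{BSO}(n))$ and $\mr H^\bullet_{\mr{red}}(L)$ factors are passive, something the paper leaves implicit. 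Conversely, the paper's brevity reflects its view of the lemma as essentially tautological once the filtration has been set up: the only part of~\eqref{fib_dif} not already absorbed into the $E_2$-page is $\d'$, and $\d'$ was \emph{defined} to be the identity $\so(n)^*[-1] \to \so(n)^*[-2]$ extended as a derivation, which on the level of invariant cohomology is precisely the Koszul map.
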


\begin{proof}
This statement follows immediately from the definition of the spectral sequence of a filtered complex.  The differential on the $E_2$ page is inherited from the restriction of the differential \eqref{fib_dif} to the space of 2-almost cycles, i.e. those terms closed for the piece of the differential that does not raise filtered degree.  This restricted differential is identical to the restriction of the summand $\d' \otimes 1$ of the differential, and $\d'$ acts on generators of $\sym^\bullet(\so(n)^*)$ exactly as stated.
\end{proof}

\begin{example} \label{3d_example}
It may be useful to the reader to understand the cohomology of the differential $\d_3$ in some small examples.  Let us consider the example where $n=3$, and where $L$ is trivial.  So $\mr H^\bullet(\mr{BSO}(3)) \iso \RR[p]$  is a polynomial ring in a single variable of degree 4, and $\mr H^\bullet(\so(3)) \iso \RR[\eta]$ is an exterior algebra in a single variable of degree 3.  Our $E_2$ page is therefore identified with the ideal $I$ in the ring $\RR[\eta, p]$ generated by $p$, and the differential $\d_3$ sends the generator $\eta$ to $p$.  In terms of a linear basis we can illustrate the complex $(E_2, \d_3)$ pictorially as
\[\xymatrix{
p & p^2 & p^3 & p^4 & \cdots \\
p\eta \ar[ur] &p^2\eta \ar[ur] &p^3\eta \ar[ur] &\cdots &
}\]
where the arrows all represent isomorphisms between one-dimensional summands.  So, when we compute the cohomology with respect to $\d_3$ the result is the one-dimensional vector space generated by $p$.
\end{example}

\begin{example} \label{4d_example}
If we consider the next simplest example, where $n=4$, we now have a pair of even generators for $\mr H^\bullet(\mr{BSO}(4))$, namely the first Pontryagin class $p$ and the Pfaffian $p'$, 
and we have a corresponding pair of odd generators $\eta, \eta'$ for $\mr H^\bullet(\so(4))$.  When we compute the cohomology with respect to the differential $\d_3$, we find a three-dimensional vector space, spanned by the classes $p$, $p'$, and $p\eta' - p'\eta$.
\end{example}

Now we are finally ready to prove the main theorem of this section.

\begin{proof}[Proof of Theorem \ref{obstruction_theorem}]
It is sufficient to identify the $E_3$-page of our spectral sequence with the desired expression, 
and to verify that the higher differentials all vanish.

First, we reinterpret the $E_2$-page in more algebraic terms.  
Observe that the graded vector space 
\[H^\bullet(\so(n)) \otimes \mr H^\bullet(\mr{BSO}(n)) \otimes \mr H^\ell_{\mr{red}}(L)\]
naturally forms a graded commutative algebra, 
as it is the tensor product of three graded commutative algebras.
Let $A = \mr H^\ell_{\mr{red}}(L)$ denote the nonunital algebra of the third tensor factor.
Then we can write the full algebra as
\[R = A[\{\eta_j\}],\{p_j\}]\]
the free commutative algebra over $A$ with the generators $c_j$ and $\eta_j$ from the algebras $\mr H^\bullet(\so(n))$ and $\mr H^\bullet(\mr{BSO}(n))$.
More useful for us is this algebra's maximal ideal 
\[\overline{R} = \bigoplus_{m+n > 0} A\, \eta^m p^n ,\]
spanned by monomials other than the unit monomial $c^0 \eta^0$.
(Here $m$ and $n$ are vectors that encode multiple exponents. 
For instance, $m = (m_1, m_2,\ldots)$ and $\eta^m = \eta_1^{m_1} \eta_2^{m_2} \cdots$.)
But the $E_2$-page corresponds to the subcomplex 
\[\mr H^\bullet(\so(n)) \otimes \mr H^\bullet_{\mr{red}}(\mr{BSO}(n)) \otimes \mr H^\ell_{\mr{red}}(L),\]
which is the ideal $I \sub \overline{R}$ generated by~$(\{p_j\})$.
Note that the quotient algebra is 
\[\overline{R}/I =\bigoplus_{m> 0} A \,\eta^m,\] 
the maximal ideal of the polynomial ring over $A$ generated by all the~$\eta$s.
In fact, $\overline{R}/I \iso \mr A \otimes \mr H_{\rm{red}}(\so(n))$.

Now we turn to the differential, which has a convenient description in terms of these algebraic structures.
The filtration from Proposition~\ref{E2_page_prop} has an analog where we work with all of $\sym(\so(n)[-2])$ and do not keep only positive symmetric powers.
Its $E_2$-page can be identified with $R$.
The differential on that page makes $R$ into a dg algebra over $A$ with derivation $\d$ sending $\eta_j$ to $p_j$.
This differential makes both $\overline{R}$ and $I$ into {\em dg} ideals; 
and this dg ideal $(I, \d)$ is precisely the $E_2$-page of our spectral sequence, as can be verified by unwinding the construction.

Hence, the $E_3$-page is the cohomology of this dg ideal $(I, \d)$.  
To compute its cohomology, we use the long exact sequence associated to the short exact sequence
\[0 \to I \to \overline{R} \to \overline{R}/I \to 0,\]
where we mean the cochain complexes with differential $\d$.
For the middle term, observe that $\mr H^\bullet(\overline{R},\d) = 0$ by direct computation (as any monomial goes to another monomial).
For $\overline{R}/I$, the differential inherited from $\overline{R}$ is zero, so the $(i+1)^{\text{st}}$ cohomology group is simply the degree $i+1$ component $(\overline{R}/I)^{i+1}$ of $\overline{R}/I$.
Hence, the long exact sequence tells us that $\mr H^0(I) = 0$ and 
\[ \mr H^i(I) \iso (\overline{R}/I)^{i+1}\]
for all $i > 0$. 

We conclude that
\[
 E_3^i \iso (R/I)^{i+1} = \bigoplus_{j+k=n+i}\mr H^j_{\mr{red}}(\so(n)) \otimes \mr H^k_{\mr{red}}(L).
\]
These cohomology classes are all represented by elements in the $E_2$-page $\mr H^\bullet(\so(n)) \otimes \mr H^\bullet(\mr{BSO}(n)) \otimes \mr H^\bullet_{\mr{red}}(L)$ with degree zero in the $\mr H^\bullet(\mr{BSO}(n))$ factor. 
In other words, these are polynomials in the generators $p_j$, not involving the $\eta_i$.  
There are no higher differentials in our spectral sequence between terms of this type, 
and so the spectral sequence collapses at the $E_3$-page.  We have, therefore, obtained the equivalence we desired.
 
The case of inner actions remains, but the proof carries over easily.
We need to replace the (homotopy) fiber of the map \eqref{htpyfib} by the (homotopy) fiber of the analogous map.  That is, the fiber of the map 
\begin{equation}
\mr{InnerAct}_{\mf{iso}(n)_{\mr{dR}}} \to \mr{InnerAct}_{\so(n) \ltimes \RR^n_{\mr{dR}}}
\end{equation}
where $\mr{InnerAct}$ means we allow local functionals purely of the background fields (as in the non-inner case, we refer to \cite[Section 11.2]{Book2}).
Concretely, that means the fiber is quasi-isomorphic to $\mr {Inner}\, \mc C_{n,L}$, whose underlying graded vector space agrees with that of
\[
\mr C^\bullet(\so(n), \sym^{>0}(\so(n)^*[-2]) \otimes \Omega^\bullet(\RR^n)) \otimes \mr C^\bullet(L)[n],
\]
with differential as in~\eqref{fib_dif}.
Note that the only change is in the far right term: $\mr C^\bullet_{\mr{red}}(L)[n]$ is replaced by $\mr C^\bullet(L)[n]$.
Such a change does not affect the proof above, which focuses on the $\so(n)$ contributions.
\end{proof}
 
\begin{remark}
Our main theorem establishes a sufficient condition for the vanishing of the framing anomaly of a topological AKSZ theory, namely the triviality of the cohomology group $\mr H^i(\so(n)) \otimes \mr H^{n-i}_{\mr{red}}(L)$ for all $i > 0$.   We believe that this condition will also be necessary.  In a follow-up paper \cite{EGWfr} we will establish the non-vanishing of the framing anomaly at the one-loop level in the case where this cohomology is non-trivial.  This is possible by evaluating the appropriate one-loop Feynman diagrams using a maximally holomorphic gauge fixing condition, by applying the results of \cite{BWhol}.  In this way it is possible to obtain a concrete identification of the one-loop framing anomaly in terms of characteristic classes.
\end{remark}

\printbibliography

\textsc{University of Massachusetts, Amherst}\\
\textsc{Department of Mathematics and Statistics, 710 N Pleasant St, Amherst, MA 01003}\\
\texttt{celliott@math.umass.edu}\\
\texttt{gwilliam@math.umass.edu}

\end{document}